\newcommand*{\circled}[1]{\lower.7ex\hbox{\tikz\draw (0pt, 0pt)%
    circle (.5em) node {\makebox[1em][c]{\small #1}};}}
\definecolor{red}{rgb}{1.00, 0.00, 0.00}  
\newtheoremstyle{examplestyle}
  {}% Space above
  {}% Space below
  {}% Body font
  {}% Indent amount
  {\itshape}% Theorem head font
  {}% Punctuation after theorem head
  {.5em}% Space after theorem head
  {\thmname{#1}\thmnumber{ #2}\thmnote{ (#3)}}% Theorem head spec
\theoremstyle{examplestyle}
\newtheorem{example}{Example}
\newcommand{\Rmnum}[1]{\expandafter\@slowromancap\romannumeral #1@}
\newtheorem{remark}{Remark}
\newtheorem{proposition}{Proposition}
\newtheorem{corollary}{Corollary}
\definecolor{SP}{RGB}{30,160,156}
\renewenvironment{proof}{{\itshape Proof.}}{}
\begin{document}
\normalsize
\title{Reliable LLM-Based Edge-Cloud-Expert Cascades for Telecom Knowledge Systems}

\author{Qiushuo Hou \IEEEmembership{Graduate Student Member, IEEE}, Sangwoo Park \IEEEmembership{Member, IEEE}, Matteo Zecchin \IEEEmembership{Member, IEEE}, Yunlong Cai \IEEEmembership{Senior Member, IEEE}, Guanding Yu \IEEEmembership{Senior Member, IEEE}, Osvaldo Simeone \IEEEmembership{Fellow, IEEE}, and Tommaso Melodia \IEEEmembership{Fellow, IEEE}
\thanks{

Q. Hou, Y. Cai, and G. Yu are with the College of Information Science and Electronic Engineering, Zhejiang University, Hangzhou 310027, China (e-mail: \{qshou, ylcai, yuguanding\}@zju.edu.cn).

S. Park and M. Zecchin are with the King’s Communications, Learning \& Information Processing (KCLIP) lab within the Centre for Intelligent Information Processing Systems (CIIPS), Department of Engineering, King’s College London, London WC2R 2LS, U.K. (e-mail: \{sangwoo.park, matteo.1.zecchin\}@kcl.ac.uk). 

{\color{black} T. Melodia and O. Simeone are with the Intelligent Networked Systems Institute (INSI) at Northeastern University (TM is in Boston, MA, USA and OS in London, UK).}

The work of M. Zecchin and O. Simeone is supported by an Open Fellowship of the EPSRC (EP/W024101/1). The work of O. Simeone is also supported by the EPSRC project (EP/X011852/1). The work of T. Melodia is based on material supported in part by the National Science Foundation under award CNS-2112471.
}}

\maketitle
\vspace{-1.5cm}
\begin{abstract}
Large language models (LLMs) are emerging as key enablers of automation in domains such as telecommunications, assisting with tasks including troubleshooting, standards interpretation, and network optimization. However, their deployment in practice must balance inference cost, latency, and reliability. In this work, we study an edge-cloud-expert cascaded LLM-based {\color{black}knowledge} system {\color{black}that supports decision-making through a question-and-answer pipeline. In it,} an efficient edge model handles routine queries, a more capable cloud model addresses complex cases, and human experts are involved only when necessary. We define a misalignment-cost constrained optimization problem, aiming to minimize average processing cost, while guaranteeing alignment of automated {\color{black}answers} with expert judgments. We propose a statistically rigorous threshold selection method based on {multiple hypothesis testing (MHT)} for a query processing mechanism based on knowledge and confidence tests. The approach provides finite-sample guarantees on misalignment risk. Experiments on the TeleQnA dataset -- a telecom-specific benchmark -- demonstrate that the proposed method achieves superior cost-efficiency compared to conventional cascaded baselines, while ensuring reliability at prescribed confidence levels. 
\end{abstract}

\begin{IEEEkeywords}
Reliable decision making, cascaded LLM-based framework, wireless systems, learn-then-test, multiple hypothesis testing
\end{IEEEkeywords}

\section{Introduction} \label{sec:intro}

\subsection{Motivation}
Large language models (LLMs) are increasingly being integrated into telecommunications to automate complex tasks and support network management. They are expected to play a central role in {\color{black} supporting decision process, as well as in closed-loop monitoring and control pipelines} \cite{LLM_4_comm_1, LLM_4_comm_2}. {\color{black}In this context, an important use case is given by LLM-based knowledge system for decision support via question-and-answer pipelines \cite{lewis2020retrieval, zhu2025large, papachristou2025leveraging}. However}, evaluations of open-source LLMs on telecom benchmarks reveal a lack of reliability guarantees \cite{LLM_evulate}. This limitation is critical, as misconfigurations or misdiagnoses by automated assistants can cause severe service disruptions and financial losses \cite{LLM_reliability_1, LLM_reliability_2}. In addition, deploying state-of-the-art LLMs such as GPT-5 incurs high computational cost and latency, particularly on resource-constrained edge devices \cite{LLM_latency_1, LLM_latency_2}.

These challenges motivate the design of hybrid edge–cloud–expert pipelines (Fig. \ref{fig:cascaded_model}), where lightweight edge models address routine queries, complex queries are escalated to a cloud LLM, and only the most uncertain cases are referred to human experts. Such cascading seeks to balance efficiency, accuracy, and cost: the edge offers low latency and privacy, the cloud delivers higher accuracy at greater expense, and human experts ensure correctness when automation is insufficient. While routing all queries to experts would maximize reliability, this approach is neither scalable nor economically viable. Instead, effective systems must guarantee strong alignment between LLM-generated outputs and expert-verified solutions, escalating only when necessary to maintain a target reliability level.

To meet this goal, we propose a query processing framework for cascaded edge–cloud–expert LLM{\color{black}-based knowledge} systems that provides formal reliability guarantees. The framework employs knowledge and confidence tests at both the edge and cloud stages to decide when escalation is warranted. By formulating the problem as a misalignment-constrained optimization, the method minimizes average processing cost while ensuring that automated decisions align with expert judgments at a user-specified confidence level.

\subsection{Related Work}
\noindent \emph{{Cascaded decision-making and routing in AI systems}}: There is a rich history of cascaded inference frameworks that trade off accuracy for efficiency. In distributed AI, model cascades employ a sequence of models of increasing capacity, where an inexpensive model first attempts the task and uncertain cases are passed to a stronger model. For example, FrugalGPT \cite{chen2023frugalgpt} dynamically routes queries between smaller and larger LLMs to reduce API costs while preserving answer quality. Similarly, SpecInfer \cite{miao2024specinfer} employs a speculative execution approach where a small draft model generates initial responses that are then verified and refined by a larger target model, achieving significant speedup in text generation tasks. Another notable example is the cascade framework proposed in LLMCascade \cite{LLM_cascade}, which uses confidence-based routing across multiple LLM APIs (e.g., from GPT-3.5 to GPT-4) based on query complexity assessment. Beyond model-only cascades, researchers have explored human–AI collaboration in decision systems. In particular, learning to defer frameworks allows an AI model to abstain and delegate a query to a human expert when the model’s confidence is low \cite{to_cascade_or_not, cambridge_paper}. Such methods train a classifier along with a rejector to decide between giving an answer or deferring to an expert, aiming to minimize overall risk. 

\noindent \emph{{LLM uncertainty estimation and reliable inference:}}  With LLMs being deployed in high-stakes settings, understanding and calibrating their uncertainty has become an important topic of research. Modern LLMs often exhibit overconfidence and can produce fluent but incorrect answers \cite{ensemble_survey, deepmind, test_time_scaling_1}. To mitigate this, various techniques have been proposed to estimate an LLM’s confidence in its responses. One practical approach is ensemble or prompt aggregation. For instance, reference \cite{ensemble} leverages multiple prompt variations as an implicit ensemble to quantify uncertainty and improve calibration without retraining the model. Reference \cite{auxiliary_model_uncertainty} uses an auxiliary model to analyze disagreement patterns among multiple responses to diagnose uncertainty sources. In addition, for white-box models, researchers have explored Bayesian methods such as Monte Carlo dropout, deep ensembles, and evidential deep learning to provide probabilistic confidence estimates in a single forward pass \cite{bayesian_prompt,cambridge_paper,zellinger2025rational}. 

Alongside uncertainty quantification, there is growing interest in routing queries based on confidence: a system can decide to answer immediately or route the query to a more expert tier if the LLM’s predicted confidence is below a threshold \cite{to_cascade_or_not, cambridge_paper}. 

\noindent \emph{{Statistical reliability for black-box models:}}  To make LLM decisions reliable, recent works seek formal statistical guarantees. The learn-then-test (LTT) framework reframes model calibration as a risk-controlling hypothesis test, enabling explicit finite-sample guarantees on error rates \cite{ltt}. By using techniques from multiple hypothesis testing (MHT), LTT can adjust a model’s prediction set or confidence threshold so that the probability of failure (e.g., a misalignment with the ground truth or expert decision) stays below a desired level \cite{ltt,ltt_adaptive,ltt_quantile, amir}. 

In the context of LLM deployment, specialized methods have also emerged to ensure model outputs meet reliability criteria. For example, the work \cite{to_cascade_or_not} proposes a ``trust or escalate'' mechanism in which an auxiliary LLM-based judge evaluates the main LLM’s answer and decides whether to trust it or defer to a human, providing provable guarantees on agreement with human experts. %Such approaches combine uncertainty estimation with decision policies to guarantee that high-risk queries are identified and handled appropriately.

%Overall, existing cascade and routing frameworks do not typically incorporate formal statistical reliability guarantees, and conversely, risk-controlled methods have not been explored in multi-tier LLM deployment scenarios under cost constraints.
{\color{black}{Overall, existing cascade and routing frameworks such as FrugalGPT \cite{chen2023frugalgpt} and LLMCascade \cite{LLM_cascade} do not incorporate formal statistical reliability guarantees. Learning-to-defer framework addresses human-AI collaboration, but requires joint training of the classifier and deferral policy on a dataset of expert decisions \cite{liu2024mitigating}. Finally, risk-controlled methods such as LTT \cite{ltt} have not been explored in multi-tier LLM deployment scenarios under cost constraints. Our work bridges these gaps by equipping a cascaded LLM routing system with post-hoc, training-free finite-sample reliability guarantees.}}

\subsection{Main Contributions}
 In this work, we bridge the gaps identified above by proposing a cascaded LLM system for question-answering {\color{black}pipelines} that is not only cost-aware but also equipped with rigorous reliability assurances. The main contributions of this paper are summarized as follows:
\begin{itemize}
\item We formulate an \emph{edge–cloud–expert LLM cascading system}, with a focus on the use case of automatic question answering for telecommunications. This framework, illustrated in Fig. \ref{fig:cascaded_model}, integrates an edge LLM, a more powerful cloud LLM, and human expertise in a unified decision pipeline. For this system, we introduce a query processing mechanism that leverages knowledge tests based on epistemic uncertainty measures and confidence tests based on the model output. Both white-box and black-box implementations are considered.

\item We develop a \emph{misalignment-cost constrained optimization framework} that balances the trade-off between decision quality and inference cost. The misalignment metric measures the discrepancy between model decisions and expert decisions, while the cost accounts for the computational and human resources required to address a query at the edge, cloud, or by a human expert. We focus on minimizing the average cost subject to an alignment reliability constraint.

\item We propose a \emph{statistically reliable threshold selection method} based on MHT that chooses uncertainty and confidence thresholds used in the cascading decision process while guaranteeing a target upper bound on the misalignment rate. This approach provides finite-sample assurances that the cascading system meets predefined reliability levels with high confidence.

\item We perform \emph{extensive experimental validation} on the TeleQnA dataset 
 -- a recent telecom QA benchmark \cite{teleQnA} -- under realistic operational scenarios. The results demonstrate that our cascaded system achieves superior cost-efficiency compared to single-model baselines, while maintaining the reliability requirements of mission-critical telecom applications. The experiments validate our theoretical framework under realistic operational scenarios and provide a detailed analysis of key system parameters, including calibration dataset size, violation upper bounds, and grid resolution. To evaluate system adaptability, we further investigate a \emph{reasoning-enhanced} cloud deployment scenario where the cloud model is upgraded to Qwen3-4B with enhanced reasoning capabilities controlled by the number of thinking tokens allocation.

 \subsection{Organization}
 The rest of the paper is organized as follows. Sec. \ref{problem formulation} introduces the problem of reliable decision-making for the cascaded LLM decision system. Sec. \ref{background} reviews practical solutions for the formulated misalignment-cost constrained optimization problem. The proposed threshold selection method is presented in Sec. \ref{proposed method}. Sec. \ref{simulation} describes the simulation setup and illustrates the simulation results. Finally, Sec. \ref{conclusion} concludes the paper.

 % The results demonstrate that our cascaded system achieves superior cost-efficiency compared to single-model baselines, while satisfying the reliability requirements of mission-critical telecom settings. We report ablation studies and case analyses highlighting how the system adaptively routes queries and the impact of different confidence threshold settings. {\color{red} this part needs to be updated}
\end{itemize}

\section{Setting and Problem Definition}\label{problem formulation}
In this section, we first introduce the cascaded LLM decision system in Fig.~\ref{fig:cascaded_model}, and then we define the problem of reliable decision-making to be studied in this paper.

\subsection{Setting}
As illustrated in Fig. \ref{fig:cascaded_model}, we consider a three-tiered edge-cloud-expert cascaded LLM decision system {\color{black}for question-answering that controls} the level of alignment between {\color{black}its answers}  and expert (human) judgement {\color{black}with minimal} cost {\color{black}in terms of system resources}. We specifically envision a deployment in which a cheaper{\color{black}, i.e., lower-cost,} edge model should address most queries, while relying on the more {\color{black}costly} cloud model and on human intervention only when necessary to ensure a reliable response \cite{cambridge_paper}. While the applicability of the methodology developed in this study is broader, we
adopt as a running example an automatic expert system for telecommunications networks, in which users' technical queries are handled by the edge, the cloud, or the human expert depending on their difficulty level \cite{LLM_cascade, to_cascade_or_not, cambridge_paper}. 

\begin{figure*}[htbp]
    \centering
    \includegraphics[width=16cm]{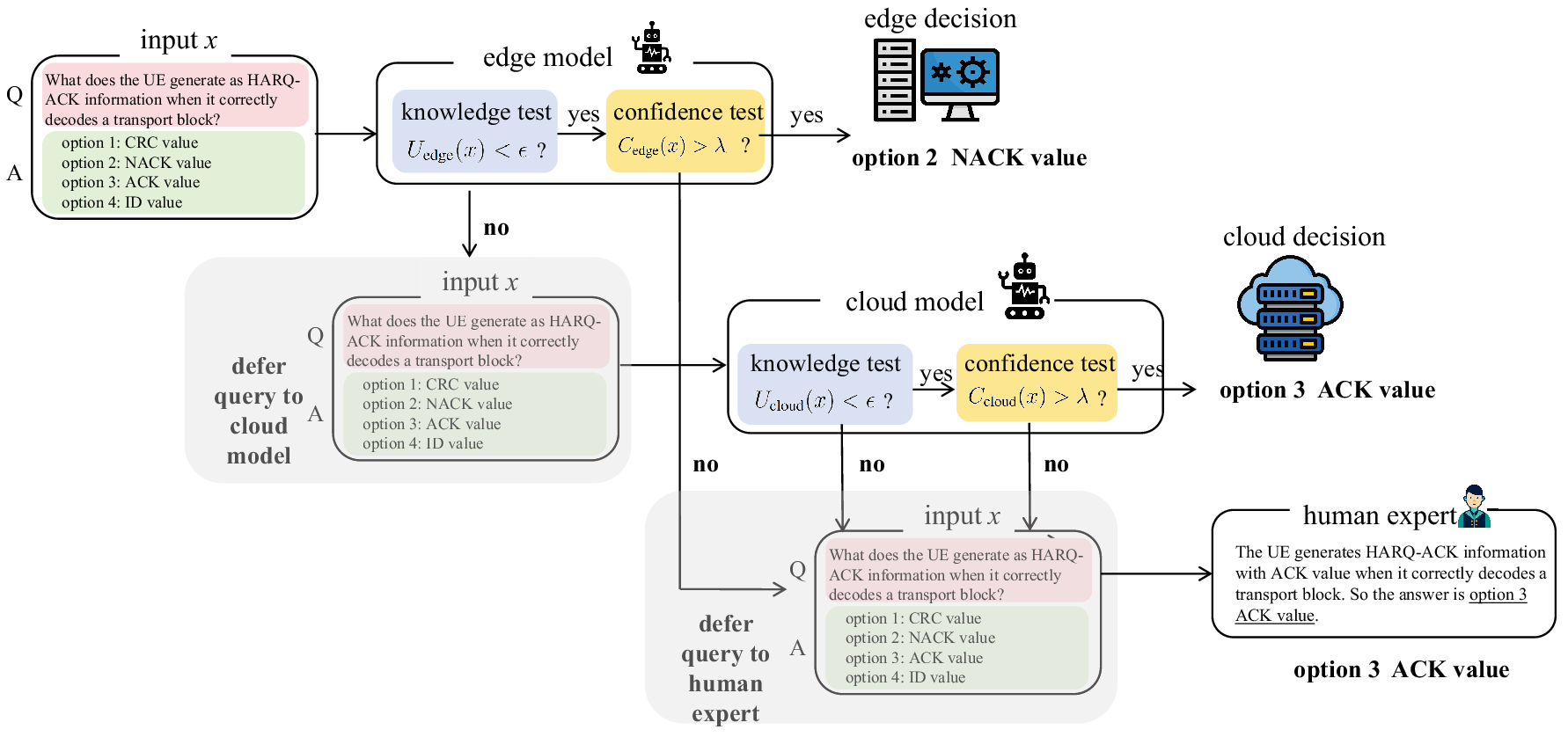}
    \caption{Cascaded edge-cloud-human system: The query is processed by the edge model $M_{\text{edge}}$ if the edge model's epistemic uncertainty $U_{\text{edge}}(x)$ remains within the acceptable level $\epsilon$, while the confidence $C_{\text{edge}}(x)$ exceeds a threshold $\lambda$, i.e., $U_{\text{edge}}(x) < \epsilon$ and $C_{\text{edge}}(x) > \lambda$. Thus, the edge decision $M_{\text{edge}}(x)$ is produced only if the edge model is sufficiently knowledgeable and confident. When the edge epistemic uncertainty condition is not met, and thus the edge model does not have sufficient knowledge to address the query, the input $x$ is forwarded to the cloud model $M_{\text{cloud}}$. Similar knowledge and confidence tests are carried out for the cloud model based on epistemic uncertainty measure $U_{\text{cloud}}(x)$ and confidence measure $C_{\text{cloud}}(x)$. If the cloud model passes the test, i.e.,  $U_{\text{cloud}}(x) < \epsilon$ and $C_{\text{cloud}}(x) > \lambda$, the cloud decision $M_{\text{cloud}}(x)$ is returned, otherwise, the input $x$ is deferred to a human expert.}
    \label{fig:cascaded_model}
    % \vspace{2cm}
\end{figure*}

As illustrated in Fig. \ref{fig:cascaded_model}, and further detailed in Sec. \ref{sec:query}, we adopt a modified version of the cascading framework introduced in \cite{cambridge_paper}, in which the decisions to process a query using the edge or cloud models are based on  \emph{knowledge} and \emph{confidence} tests that leverage the epistemic uncertainty and confidence scores for the two models. 

To elaborate, let $x \in \mathcal{X}$ be an input query. For every input $x$, the models at the edge and at the cloud, denoted respectively as $M_\text{edge}$ and $M_\text{cloud}$, produce the following scores:
%have associated \emph{epistemic uncertainty scores} $U_{\text{edge}}(x) \in [0, 1]$ and $U_{\text{cloud}}(x) \in [0, 1]$, as well as \emph{confidence scores} $C_{\text{edge}}(x) \in [0, 1]$ and $C_{\text{cloud}}(x) \in [0, 1]$:
\begin{itemize}
    \item \emph{Epistemic uncertainty scores}: The {epistemic uncertainty scores} $U_{\text{edge}}(x)\in[0,1]$ and $U_{\text{cloud}}(x)\\\in[0,1]$  reflect the respective levels of \emph{knowledge} that edge and cloud have on the query encoded by input $x$.  %level of ignorance, and thus conversely of expertise or knowledge, of the edge and cloud models, respectively, for the query encoded by input $x$. 
    %%a higher ignorance, i.e., 
    Larger scores indicate higher epistemic uncertainty and thus lower knowledge levels. As will be further discussed in Sec. \ref{obtain_u_and_c}, epistemic uncertainty scores are typically obtained via ensembling methods that evaluate the \emph{disagreement}  between decisions made by the model for input $x$ under different conditions. A higher disagreement expresses a larger epistemic uncertainty. Specifically, one can construct ensembles of the same model with different weights when \emph{white-box} access to the model is available \cite{cambridge_paper, bayesian_prompt}; or leverage variations of the prompt template to query a single, possibly \emph{black-box}, model  \cite{deepmind, ensemble, ensemble_survey}. 

    \item \emph{Confidence scores}: The confidence scores $C_{\text{edge}}(x)\in[0,1]$ and $C_{\text{cloud}}(x)\in[0,1]$ measure the predictive confidence levels associated with the edge model and cloud model, respectively. Larger scores indicate lower prediction uncertainty.  The confidence score can be obtained directly from the predictive distribution assigned by the model to the output tokens when one has \emph{white-box} access to the model  \cite{ensemble_survey}; or  through self-confidence measures when model access is of \emph{black-box}  nature \cite{ensemble_survey,test_time_scaling_1}.
\end{itemize}

{\color{black}{The two scores serve complementary roles in the routing decision. The epistemic uncertainty score assesses whether the model has sufficient domain knowledge to address the query. In contrast, the confidence score measures how certain the model is about its specific answer on the basis of that knowledge, while accounting also for potential inherent ambiguities in the given query. Both scores must be controlled in order for the system to trust the model's output. In fact, a model may be knowledgeable about a topic, yet still uncertain about a particular ill-posed or ambiguous query, motivating the confidence test. Conversely, it may produce a confident-sounding answer, while lacking genuine domain coverage, motivating the knowledge test.}}

% This uncertainty is generally contributed by both the inherent aleatoric uncertainty of the prediction, and by the epistemic uncertainty arising from the knowledge limitation of the edge model. 
% Note that the scores $U_{\text{edge}}(x)$ and $C_{\text{edge}}(x)$ are only estimates of the epistemic uncertainty and the true prediction confidence, respectively, and that the corresponding discrepancies between true and estimated measures can be measured by some calibration metrics [\textcolor{blue}{ref}]. In a similar way, the cloud model produces the epistemic uncertainty score $U_{\text{cloud}}(x)$ and confidence score $C_{\text{cloud}}(x)$.

\subsection{Query Processing}\label{sec:query}

In a manner inspired by \cite{cambridge_paper}, a query $x$ is processed by the edge-cloud-expert system through a threshold-based routing mechanism that leverages knowledge and confidence tests. Specifically, as shown in Fig. \ref{fig:cascaded_model}, the edge model first processes the query $x$ by computing its epistemic uncertainty score $U_\text{edge}(x)$ and its confidence score $C_\text{edge}(x)$. The edge decision $M_\text{edge}(x)$ is produced as the final output of the system if the following two conditions are satisfied: 
\begin{itemize}
    \item \emph{Edge knowledge test}: The edge decision passes the {knowledge test}, in the sense that the epistemic uncertainty score $U_{\text{edge}}(x)$ does not exceed a pre-defined level $\epsilon$:
    \begin{align} \label{eq:kt_edge}
        U_{\text{edge}}(x) < \epsilon.
    \end{align}
    \item \emph{Edge confidence test}: The edge decision also passes {confidence test}, in the sense that the confidence level $C_{\text{edge}}(x)$ is no smaller than a pre-defined level $\lambda$:
    \begin{align} \label{eq:ct_edge}
        C_{\text{edge}}(x) > \lambda.
    \end{align}
\end{itemize}
Thus, the edge decision $M_{\text{edge}}(x)$ is returned only if the edge model is sufficiently knowledgeable (as per (\ref{eq:kt_edge}))  and confident (as per (\ref{eq:ct_edge})). 

If the knowledge test is not passed by the edge model, the input $x$ is escalated to the cloud model, under the assumption that the cloud model may possess broader knowledge and thus be better equipped to handle the query.

Conversely, if the knowledge test is passed but the confidence test fails, this indicates that the edge model is familiar with the query but uncertain about producing a reliable answer. In this case, the routing strategy in Fig. \ref{fig:cascaded_model} forwards the query directly to the human expert, relying on the expert’s ability to disambiguate the query and provide a dependable response.
% { Otherwise -- either knowledge test is not passed (\emph{lack of edge knowledge case}) or  knowledge test is passed but not the confidence test (\emph{query per se being too difficult case}) --  the input $x$ is forwarded to the cloud (former case) or to the human expert (latter case). }

%\subsection{Query Processing at the Cloud}\label{sec:query}

Upon receiving a deferred input $x$ from the edge, the cloud model evaluates its own scores $U_\text{cloud}(x)$ and $C_\text{cloud}(x)$. The cloud decision $M_{\text{cloud}}(x)$  is produced as the final system output as long as it passes both the \emph{cloud knowledge test} and the \emph{cloud confidence test}, i.e., if 
\begin{align} \label{eq:t_cloud}
    \text{$U_{\text{cloud}}(x) < \epsilon$ and $C_{\text{cloud}}(x) > \lambda$.}
\end{align}
If any of the test fails, the input $x$ is forwarded to a human expert. 

{\color{black}Denote as $\phi=[\epsilon, \lambda]$ the thresholds parameterizing the routing decisions (\ref{eq:kt_edge})--(\ref{eq:t_cloud}).} Overall, using (\ref{eq:kt_edge})--(\ref{eq:t_cloud}), the output of the  edge-cloud-expert decision process is given by \begin{equation}\label{cas}
f_{\phi}(x) = 
\begin{cases}
M_{\text{edge}}(x) & \text{if } C_{\phi,\text{edge}}(x)=1 \\
M_{\text{cloud}}(x) & \text{if } C_{\phi,\text{cloud}}(x)=1 \\
y & \text{otherwise},
\end{cases}
\end{equation}
where \begin{equation} C_{\phi,\text{edge}}(x) = \mathds{1}\{  U_{\text{edge}}(x) < \epsilon  \ \text{and}\  C_{\text{edge}}(x) > \lambda\},\end{equation} with $\mathds{1}(\text{true})=1$ and $\mathds{1}(\text{false})=0$, equals 1 if the system decision is produced at the edge, and   
\begin{equation} 
C_{\phi,\text{cloud}}(x) = \mathds{1}\{ U_{\text{edge}}(x) \geq \epsilon\  \text{and}\  U_{\text{cloud}}(x) < \epsilon\  \text{and}\  C_{\text{cloud}}(x) > \lambda\}
\end{equation} 
% \begin{multline}
%     C_{\phi,\text{cloud}}(x) = \mathds{1}\{ U_{\text{edge}}(x) \geq \epsilon\  \text{and}\  U_{\text{cloud}}(x) < \epsilon\ \text{and}\  C_{\text{cloud}}(x) > \lambda\}.
% \end{multline}
equals $1$ if the decision is produced at the cloud.
% which depends on the parameter $\phi=[\epsilon,\lambda]$ that collects the two thresholds $\epsilon$ and $\lambda$ used in the knowledge and confidence tests in (\ref{eq:kt_edge})--(\ref{eq:t_cloud}), respectively, and $y$ represents the ground-truth answer produced by the human expert. { To simplify the notation in the following, we will denote as $r_\phi(x) \in \{ \text{e}, \text{c}, \text{h}\}$ the chosen routing for the input $x$ with which $f_\phi(x)$ in (\ref{cas}) can be rewritten as 
% \begin{align}
%     f_\phi(x) = M_\text{edge}(x) \cdot \mathbbm{1}\{r_\phi(x)=\text{e}\} + M_\text{cloud}(x)\cdot \mathbbm{1}\{r_\phi(x)=\text{c}\} + y \cdot \mathbbm{1}\{r_\phi(x)=\text{h}\}.
% \end{align}
% }

\subsection{Problem Definition}
For an input-output pair $(x,y)$, where $y$ represents human expert judgment for the input $x$, the system performance is measured in terms of misalignment and cost, which are defined as follows.

\subsubsection{Misalignment loss} The misalignment loss equals $1$ if the cascaded system fails to reflect the human expert decision $y$, while returning $0$ otherwise. Mathematically, the misalignment loss is defined as 
\begin{equation}\label{misalignmnet}
    \mathcal{A}(\phi|x,y) = \mathds{1}\{f_{\phi}(x)\neq y \}.
\end{equation}

\subsubsection{System cost} Denoting as $L_{\text{edge}}$, $L_{\text{cloud}}$, and $L_{\text{human}}$ the non-negative scalar costs associated with the evaluation of a decision at the edge, cloud, and expert, respectively, the system cost is defined as 
% {\color{red} please rewrite the cost (commented out below) by using (5)-(6)}
% \begin{equation}\label{cost}
%     \mathcal{L}(\phi|x) = { 
%         L_{\text{edge}} \cdot \mathds{1}\{r_{\phi}(x)= \text{e}\} +L_{\text{cloud}}\cdot \mathds{1}\{r_{\phi}(x)= \text{c}\} +L_{\text{human}}\cdot \mathds{1}\{r_{\phi}(x)= \text{h}\}.}
% \end{equation}
% \begin{equation}\label{cost}
%     \begin{aligned}
%         &\mathcal{L}(\phi|x) = C_{\phi,\text{edge}}(x) \cdot L_{\text{edge}} + C_{\phi,\text{cloud}}(x) \cdot L_{\text{cloud}}\\ &+ \mathds{1}\{(U_{\text{edge}}(x) < \epsilon \land C_{\text{edge}}(x) \leq \lambda) \lor (U_{\text{edge}}(x) \geq \epsilon \land (U_{\text{cloud}}(x) \geq \epsilon \lor C_{\text{cloud}}(x) \leq \lambda))\} \cdot L_{\text{human}}.
%     \end{aligned}
% \end{equation}
    \begin{equation}\label{cost}
    % \begin{aligned}
        \mathcal{L}(\phi|x) = C_{\phi,\text{edge}}(x) \cdot L_{\text{edge}} + C_{\phi,\text{cloud}}(x) \cdot L_{\text{cloud}}+ (1-C_{\phi,\text{edge}}(x)- C_{\phi,\text{cloud}}(x))\cdot L_{\text{human}}.
    % \end{aligned}
\end{equation}
% \begin{multline}\label{cost}
%     \mathcal{L}(\phi|x) = C_{\phi,\text{edge}}(x) \cdot L_{\text{edge}} + C_{\phi,\text{cloud}}(x) \cdot L_{\text{cloud}} \\
%     + (1-C_{\phi,\text{edge}}(x)- C_{\phi,\text{cloud}}(x))\cdot L_{\text{human}}.
% \end{multline}

We assume the conditions $L_{\text{human}} \gg L_{\text{cloud}} > L_{\text{edge}}\geq 0$, which indicate that the cost increases as the query is processed closer to the human expert. 

Our goal is to minimize the average cost, while ensuring that the average misalignment rate is no larger than an acceptable upper bound $\alpha$. This objective is  formalized by the problem
\begin{subequations}\label{constrain_problem}
\begin{align}
    \underset{\phi}{\text{minimize}}\ &\{R_{\mathcal{L}}(\phi)=\mathbb{E}[\mathcal{L}(\phi|x)]\}\\
    \text{subject to}\ &R_{\mathcal{A}}(\phi)=\mathbb{E}[\mathcal{A}(\phi|x,y)]\leq \alpha, \label{constraint}
\end{align}
\end{subequations}
where the expectation $\mathbb{E}[\cdot]$ is over the distribution $P_{xy}$ of the ground-truth test data $(x,y)$.

Problem (\ref{constrain_problem}) cannot be directly addressed since the data distribution $P_{xy}$ is generally unknown. Instead, we assume to have access to independent and identically distributed (i.i.d.)  \emph{calibration data} $\mathcal{D} = \{(x_n, y_n)\}_{n=1}^N \mathop{\sim}\limits^{\text{i.i.d.}} P_{xy}$. With these data, we aim to find thresholds $\phi^*=[\epsilon^*,\lambda^*]$, to be used in the decision  (\ref{cas}), such that the constraint (\ref{constraint}) on the misalignment rate $R_\mathcal{A}(\phi)$ is satisfied with sufficiently high probability, while making a best effort at minimizing the average system cost $R_\mathcal{L}(\phi)$. Specifically, given a user-defined tolerance level $1-\delta$, we impose the alignment requirement
\begin{equation}\label{FWER_goal}
    \Pr[R_\mathcal{A}(\phi^*)\leq \alpha]\geq 1-\delta,
\end{equation}
where the probability is with respect to the calibration dataset $\mathcal{D}$.

\section{Background: Constrained Empirical Risk Minimization}\label{background}

In this section, we review practical solutions for the constrained risk minimization problem (\ref{constrain_problem}) formulated in Sec. \ref{problem formulation} based on the availability of a dataset $\mathcal{D}$. As it will be noted, none of these solutions can provide the target formal guarantee in (\ref{FWER_goal}). 

Conventional solutions to the problem (\ref{constrain_problem}) replace the true averages in (\ref{constrain_problem}) with empirical estimates obtained using dataset $\mathcal{D}$. The resulting optimization problem can be solved using any off-the-shelf methods, such as gradient-based methods or grid search.  To elaborate, given the dataset $\mathcal{D}$, the empirical estimates of risk functions $R_{\mathcal{L}}(\phi)$ and $R_{\mathcal{A}}(\phi)$ in problem (\ref{constrain_problem}) are obtained as
\begin{align} 
        \hat{R}_{\mathcal{L}}(\phi) &= \frac{1}{N}\sum_{n=1}^N \mathcal{L}(\phi|x_n),\label{empirical_cost}\\
        \hat{R}_{\mathcal{A}}(\phi) &= \frac{1}{N}\sum_{n=1}^N \mathcal{A}(\phi|x_n,y_n),\label{empirical_misalignment}
\end{align}
respectively.

 Using these estimates in 
(\ref{constrain_problem}), one can address the resulting problem using grid search. In this case, one limits the optimization to the discrete parameter space of $M\cdot Q$ candidate solutions for some integers $M$ and $Q$  given by
\begin{equation}\label{grid}
    \Phi = \{\{(\epsilon_{m}, \lambda_{q})\}_{m=1}^M\}_{q=1}^Q,
\end{equation}
where 
\begin{equation}
\begin{aligned}
\epsilon_m &= \frac{m-1}{M-1}, \\
\text{and} \ \ \lambda_q &=\frac{q-1}{Q-1},
\end{aligned}
\end{equation}
for $m=1,\ldots,M$ and $q=1,\ldots,Q$. Grid search then solves problem (\ref{constrain_problem}) by exhaustively evaluating the empirical risks $\hat{R}_{\mathcal{L}}(\phi)$ in (\ref{empirical_cost}) and $\hat{R}_{\mathcal{A}}(\phi)$ in (\ref{empirical_misalignment}) for all possible  thresholds $\phi\in\Phi$, and selecting the threshold $\phi^*$ that satisfies the misalignment constraint $\hat{R}_{\mathcal{A}}(\phi)\leq \alpha$ while minimizing the system cost $\hat{R}_{\mathcal{L}}(\phi)$.

Any optimization method based on replacing the true averages with empirical estimates generally does not offer any formal guarantees on the misalignment constraint (\ref{constraint}). In particular, they do not provide any mechanism to meet the constraint (\ref{FWER_goal}). The proposed approach, introduced in the next section, addresses this limitation.

\section{Reliable Edge-Cloud-Expert Cascading}\label{proposed method}
In this section, we propose a novel methodology that optimizes the thresholds $\phi$ in problem (\ref{constrain_problem}) to provide formal statistical reliability guarantees as in (\ref{FWER_goal}). This is done by leveraging recent advances in hyperparameter selection via    \emph{multiple hypothesis testing} (MHT) \cite{ltt,to_cascade_or_not,ltt_test_time_scaling, amir}. {\color{black}{ While this paper focuses on the telecommunications domain, the proposed methodology framework is domain-agnostic and can be applied to any setting in which a calibration set is available.}} We start by introducing ways to evaluate the epistemic uncertainty and confidence scores used in the cascading rule (\ref{cas}), and then we discuss the optimization of the thresholds $\phi$.

\subsection{Epistemic Uncertainty Scores and Confidence Scores}\label{obtain_u_and_c}
While the proposed methodology applies to any choice of epistemic uncertainty and confidence scores, in our evaluations to be presented in Sec.~\ref{simulation}, we adopt the following two approaches. The first is representative of methods applicable to white-box models, while the second can also be implemented with black-box models. {\color{black}{We refer to a model as \emph{white-box} when full access to its internal parameters and architecture is available, enabling direct inspection of model weights and activations. Conversely, a \emph{black-box} model is one accessible only through its input-output interface, e.g., via an API or prompt-based interaction, with no access to internal representations.}}

\subsubsection{Bayesian learning for white-box models}\label{Bayes_white_box}
When white-box access to the model is available, one can leverage schemes that account for epistemic uncertainty at the level of model weights. A principled approach is to adopt a Bayesian formulation for one or more of the layers of the model \cite{simeone2022machine}. A typical implementation of this approach modifies the last layer,
i.e., the classification head, as a Bayesian logistic regression model whose weights $w$ follow the standard Gaussian prior distribution \cite{cambridge_paper}. 

Using a separate held-out dataset $\mathcal{D}^{\text{val}}$, which is distinct from the calibration dataset $\mathcal{D}$, one can obtain an approximate posterior $p(w|\mathcal{D}^{\text{val}})$ over the weights $w$ through variational inference. Alternatively, one can generate samples from distribution $p(w|\mathcal{D}^{\text{val}})$  via Monte Carlo methods \cite{simeone2022machine}. Denoting as $p_w(y|x)$ the model output distribution for a given input $x$, this produces an \emph{ensemble} of models $p_w(y|x)$ with random variables $w\sim p(w|\mathcal{D}^\text{val})$. 

Accordingly, the predictive distribution of the ensemble is given by the expected value  \\$\mathbb{E}_{w\sim p(w|\mathcal{D}^{\text{val}})}[p_w(y|x)]$  with respect to the weights  $w\sim p(w|\mathcal{D}^\text{val})$. The resulting confidence score is evaluated by maximizing over the label index $y$ as \cite{cambridge_paper}
\begin{align} \label{eq:C_wb}
C_{\text{wb}}(x) &= \max_y \mathbb{E}_{w\sim p(w|\mathcal{D}^{\text{val}})}[p_w(y|x)].
\end{align}

The epistemic uncertainty score is then evaluated as the variance of the outputs of the model ensemble as
\begin{align} \label{eq:U_wb}
U_{\text{wb}}(x) &= \mathbb{E}_{w \sim p(w|\mathcal{D}^\text{val})} \Big[ \big(\max_y p_w(y|x)-C_\text{wb}(x)\big)^2 \Big].
\end{align}
This represents a measure of the disagreement among the members $p_w(y|x)$ of the ensemble. 

% where $p_w(y|x)$ represents the output probability for an input $x$ given fixed weights $w$. %  and the averages are evaluated over weights $w$ sampled from the posterior $p(w|\mathcal{D}^{\text{val}})$.
\subsubsection{Prompt-based inference for black-box models}\label{ensembel_black_box}
When only black-box access to the model is available, one can leverage the self-confidence score \cite{ensemble_survey,test_time_scaling_1}, which is produced by the model's self-assessment in response to the input $x$.
 The self-assessment score can be elicited via a \emph{prompt template} $z$ such as  $z=\texttt{Read the question, provide your answer, and}$ $\texttt{your confidence in this answer.}$ $\texttt{Q: [question]}$ 
\cite{ensemble_survey,LLM_self,llm_as_a_judge}. 
% {Common strategy is to design prompt template $z$ such as $z=\texttt{Answer the following with score between 0 and 1.}$ $\texttt{Q: [question]}$ \cite{ensemble_survey,test_time_scaling_1}. }

In order to evaluate the epistemic uncertainty, one can then evaluate the disagreements among the self-confidence scores $C_\text{sc}(x|z_k)$ produced by using $K$ different prompt templates $z_k$ for $k=1,...,K$. 

Overall, in a manner similar to (\ref{eq:C_wb}) and (\ref{eq:U_wb}), the confidence and epistemic uncertainty scores are defined as
% { Following \cite{deepmind}, we design $K$ distinct prompts for multiple-choice question answering task by randomly permuting the ordering of the candidate choices.} 
% targeted prompts. We then implement a prompt variation strategy  \cite{ensemble} which allows us to obtain $K$ multiple self-confidence scores $\{C_{\text{sc}}^k(x)\}_{k=1}^K$.
%to obtain $K$ multiple confidence estimates for each query, and collect the corresponding confidence scores $\{C_{\text{sc}}^k(x)\}_{k=1}^K$. 
% The epistemic and confidence scores can then be defined as
\begin{align} \label{eq:C_bb}
        C_{\text{bb}}(x) &= \frac{1}{K} \sum_{k=1}^{K} C_{\text{sc}}(x|z_k),\\
        \text{and} \ \ U_{\text{bb}}(x) &= \frac{1}{K-1} \sum_{k=1}^K \left(C_{\text{sc}}(x|z_k) - C_{\text{bb}}(x)\right)^2, \label{eq:U_bb}
\end{align}
respectively. Further details on the implementations of the scores (\ref{eq:C_wb})--(\ref{eq:U_wb}) and (\ref{eq:C_bb})--(\ref{eq:U_bb}),  including the design of prompt are provided in Appendix \ref{appendix:confidence score}.

\subsection{Reliable Threshold Optimization}
\label{Reliable Threshold Optimization}
% and an illustration is provided in Fig. \ref{fig:FST}, with details explained in the rest of this section.

% Our methodology builds specifically on the DAGGER of {\textcolor{blue}{[DAGGER]}}, which was tailored for hyparameter selection {\textcolor{blue}{used}} in {\textcolor{blue}{[Amir]}}. Accordingly, 

Using any pre-defined confidence and epistemic uncertainty scores, the proposed methodology, referred to as \emph{MHT-empirical risk
minimization} (MHT-ERM), operates on a grid $\Phi$ of candidate threshold pairs as in (\ref{grid}) to find a pair of thresholds $\phi^*$ that meet the alignment requirement (\ref{FWER_goal}). {\color{black}{It is emphasized that the guarantee (\ref{FWER_goal}) holds irrespective of the quality of the scores. Poorly calibrated scores may increase system cost by causing unnecessary deferral to the human expert, but cannot cause the alignment guarantee to be violated.}} As detailed in the following, MHT-ERM builds on learn-then-test (LTT) \cite{ltt} by leveraging a multi-start fixed sequence testing-based MHT \cite{goeman2014multiple}. The main steps of the procedure are summarized in Algorithm~\ref{alg:mht_optimization}.

\subsubsection{Constructing testing sequences}The starting point of MHT-ERM  is to associate each  threshold  $\phi_{m,q}=(\epsilon_m,\lambda_q)$ in the grid $\Phi$ with the null hypothesis
\begin{equation}\label{null_hypo}
    \mathcal{H}_{m,q}: R_{\mathcal{A}}(\phi_{m,q})> \alpha
\end{equation}
that the choice $\phi_{m,q}\in \Phi$ does not meet the constraint (\ref{constraint}). Rejecting the null hypothesis $\mathcal{H}_{m,q}$ indicates that parameter $\phi_{m,q}$ is deemed to satisfy the misalignment constraint (\ref{constraint}).

{\color{black}{Our testing strategy leverages the observation that increasing the threshold $\lambda$ for a fixed threshold $\epsilon$ can only improve alignment.}} In fact, by (\ref{cas}), a smaller $\lambda$ entails that more queries remain in the edge-cloud system rather than being deferred to a human expert. {\color{black}{Mathematically, the misalignment loss $R_A(\phi_{m,q})$ is non-increasing with respect to $\lambda_q$ and thus with respect to index $q$ from $1$ to $Q$. In contrast, the dependence on the threshold $\epsilon_m$ is non-trivial. In fact, increasing the threshold $\epsilon_m$ routes more queries from the edge to the cloud. In so doing, it may simultaneously redirect queries that would have been handled by a human expert after failing the edge confidence test to the cloud, potentially increasing misalignment. This yields a non-monotonic behavior of the misalignment loss with respect to index $m$.}}

Using this insight, as illustrated in Fig.~\ref{fig:FST}, to test the null hypotheses (\ref{null_hypo}) for $m=1,\ldots,M$ and $q=1,\ldots,Q$, we construct $M$ parallel fixed testing sequences. Each $m$-th sequence corresponds to a fixed threshold $\epsilon_m$, while the confidence thresholds $\lambda$ vary from the highest value $\lambda_Q$ to the lowest value $\lambda_1$ (going from left to right in the first panel of Fig.~\ref{fig:FST}). 
 Accordingly, along each $m$-th sequence, the hyperparameters in the grid are listed in the order of $(\epsilon_m, \lambda_{Q}), (\epsilon_m, \lambda_{Q-1}), \ldots, (\epsilon_m, \lambda_{1})$ where $\lambda_{Q} > \lambda_{Q-1} > \cdots > \lambda_{1}$. 

As discussed next, the testing order in Fig.~\ref{fig:FST} stipulates that, if a pair $\phi_{m,q}$ is deemed not to satisfy (\ref{constraint}), and thus the null hypothesis $\mathcal{H}_{m,q}$ is accepted, the thresholds $\phi_{m,q'}$ with $q'<q$ are not considered as valid candidates, as they are also deemed not to meet (\ref{constraint}). This is because the misalignment loss is known to increase with $q$. %the parallel testing of the $M$ values in the grid.

\begin{figure*}[htbp]
    \centering
    \includegraphics[width=15cm]{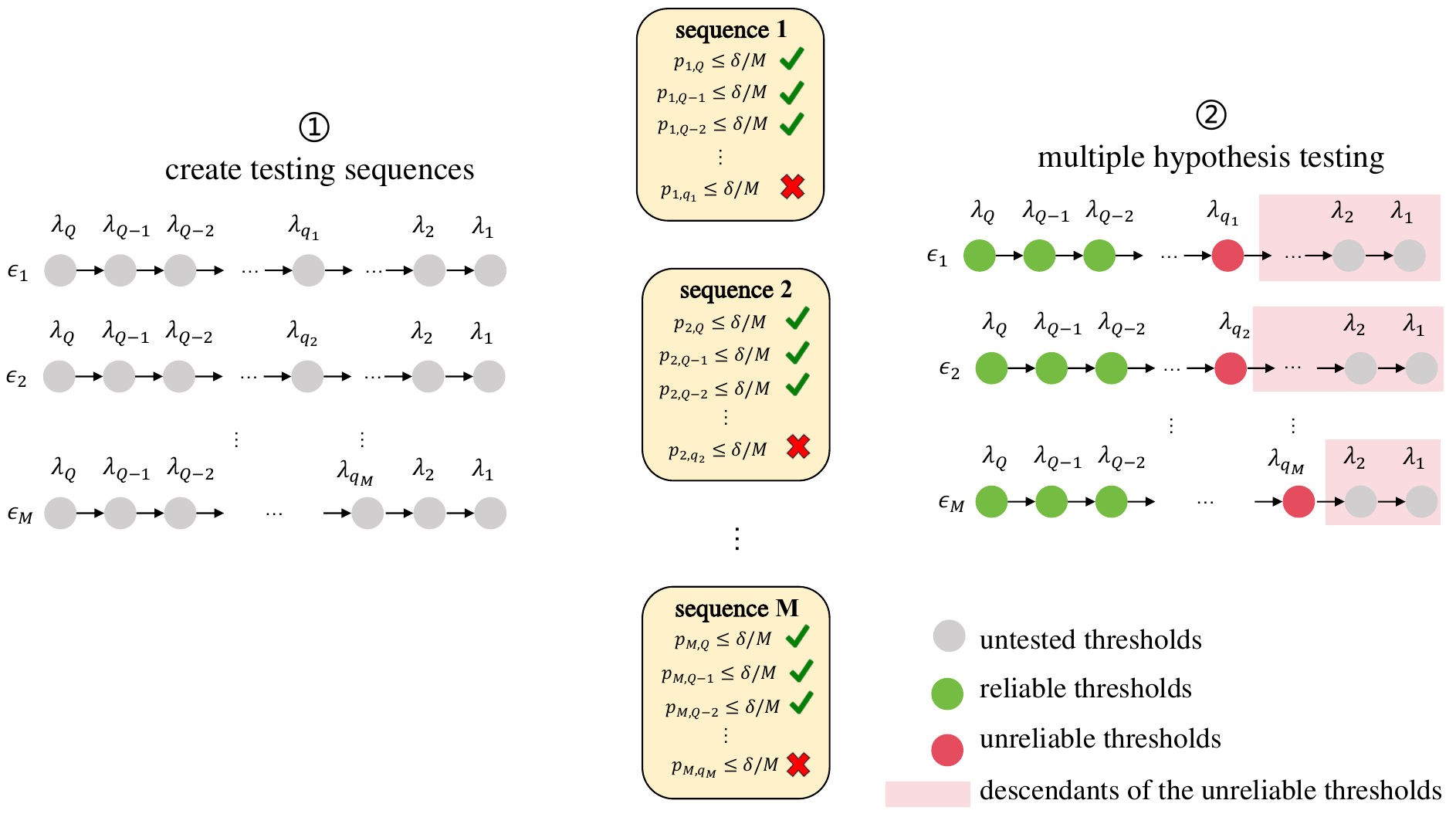}
    \caption{{\color{black}{Illustration of the parallel fixed sequence testing MHT step carried out by the proposed MHT-ERM methodology.}} For each $m$-th sequence corresponding to a value $\epsilon_m$ of the confidence threshold, a pair of thresholds $(\epsilon_m,\lambda_q)$ is tested at each step, starting from $\lambda_Q=1$ and progressively decreasing $q$ through the sequence. The p-value of each pair of thresholds is compared against the risk level $\delta/M$ to assess the reliability of thresholds $(\epsilon_m,\lambda_q)$. All descendants of unreliable thresholds are deemed unreliable.}
    \label{fig:FST}
    % \vspace{2cm}
\end{figure*}

\subsubsection{Multiple hypothesis testing} 
Testing for each hypothesis $\mathcal{H}_{m,q}$ in (\ref{null_hypo}) is carried out by computing a p-value $p_{m,q}$ using the calibration dataset $\mathcal{D}$. By definition, a p-value should be super-uniform under the null $\mathcal{H}_{m,q}$, i.e.,
\begin{equation}\label{eq:superu}
    \Pr[p_{m,q}\leq p|\mathcal{H}_{m,q}]\leq p
\end{equation}
for any probability $p\in[0,1]$. A p-value $p_{m,q}$ is unlikely to be small if the null $\mathcal{H}_{m,q}$ is true, and thus a lower $p$-value $p_{m,q}$ may justify the rejection of the hypothesis $\mathcal{H}_{m,q}$.

Since the alignment loss is bounded in the interval $[0,1]$, a p-value $p_{m,q}$ can be obtained via the Hoeffding inequality\cite{hoeffding1963probability} as
\begin{equation}\label{p_value}
    p_{m,q} = e^{-2N(\alpha-\hat{R}_{\mathcal{A}}(\phi_{m,q}|\mathcal{D}))^2_+},
\end{equation}
where $(\cdot)_+$ denotes the positive part function, i.e., $(x)_+ = \max(0, x)$. Intuitively, the function (\ref{p_value}) decreases as the one-sided margin $(\alpha-\hat{R}_{\mathcal{A}}(\phi_{m,q}|\mathcal{D}))_+$ for the empirical estimate of the constraint (\ref{constraint}) grows larger, indicating the presence of more evidence against the null hypothesis $\mathcal{H}_{m,q}$. 

Using the p-values $\{\{p_{m,q}\}_{m=1}^M\}_{q=1}^Q$, MHT-ERM performs fixed-sequence testing simultaneously on the $M$ parallel chains in Fig.~\ref{fig:FST}.  Specifically, for each $m$-th chain, MHT-ERM tests the threshold $\phi_{m,q}$ starting from the most reliable, with $q=Q$, halting the testing at the largest value of $q$ for which the null $\mathcal{H}_{m,q}$ is not rejected, indicating a choice that does not meet the requirement (\ref{constraint}). 

Mathematically, using Bonferroni correction across the $M$ chains, MHT-ERM tests a parameter $\phi_{m,q}$ via the rule \begin{align}
    p_{m,q} \leq \frac{\delta}{M} \Rightarrow \mathcal{H}_{m,q} \text{ rejected}.
\end{align}
Furthermore, it stops testing at index $q_{m}$ given by
\begin{equation}\label{FST_stopping}
    q_m = \max \bigg\{q=1,\ldots,Q:p_{m,q}> \frac{\delta}{M}\bigg\}.
\end{equation}

Finally, the MHT steps construct a subset $\Phi^* \subseteq \Phi$  as the collection of all threshold pairs $\phi_{m,q}$ that are deemed to be reliable (i.e., for which the null $\mathcal{H}_{m,q}$ is rejected) across all sequences, i.e.,
\begin{equation}
\Phi^* = \bigcup_{m=1}^{M} \{(\epsilon_m,\lambda_{Q}),\ldots,(\epsilon_m,\lambda_{q_{m}+1})\}.
\end{equation}
Note that, whenever $\Phi^*$ is empty, we set $\Phi^* = \{\phi^*=(0,1)\}$, which always ensures the reliability condition $R_{\mathcal{A}}(\phi^*)\leq\alpha$.
% $\Phi^*$ generally is not empty given $\phi=(0,1)$ can always ensure $R_{\mathcal{A}}\leq\alpha$.

\subsubsection{Minimizing the cost} {\color{black}{Finally, MHT-ERM selects the optimal threshold pair  by minimizing the empirical cost risk over the chosen subset $\Phi^*$ according to the problem}}
\begin{equation} \label{eq:phi_star}
    \phi^* = \mathop{\arg\min}\limits_{\phi\in\Phi^*}\ \hat{R}_{\mathcal{L}}(\phi|\mathcal{D}).
\end{equation}

\subsection{Theoretical Properties}
MHT-ERM provides the formal statistical guarantee (\ref{FWER_goal}) that the selected parameter $\phi^*$ satisfies the misalignment constraint (\ref{constraint}) with probability at least $1-\delta$, while simultaneously making a best effort at optimizing computational efficiency.

To elaborate on this, we first show that the subset $\Phi^*\subseteq \Phi$ returned by MHT-ERM is such that every threshold pair in it satisfies the alignment constraint (\ref{constraint}) with high probability as summarized in the following proposition. 

\begin{proposition}[Simultaneous alignment guarantees for subset $\Phi^*$] \label{prop}
With probability at least $1-\delta$, the subset $\Phi^* \subseteq \Phi$ returned by MHT-ERM only contains  threshold pairs that simultaneously satisfy the alignment constraint (\ref{constraint}), i.e., 
\begin{align}\label{proposition 1}
    \Pr\big[ \forall \phi \in \Phi^*: R_\mathcal{A}(\phi) \leq \alpha \big] \geq 1-\delta,
\end{align}
where the probability is over the data set $\mathcal{D}$. 
\end{proposition}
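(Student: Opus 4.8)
The plan is to read (\ref{proposition 1}) as a family-wise error rate (FWER) control statement and to bound the probability of the complementary ``bad'' event, namely that the returned subset $\Phi^*$ contains at least one threshold pair violating the alignment constraint (\ref{constraint}). I would decompose this bad event across the $M$ parallel chains: writing $E_m$ for the event that the $m$-th chain places into $\Phi^*$ some pair $\phi_{m,q}$ with $R_\mathcal{A}(\phi_{m,q})>\alpha$ (that is, a rejected \emph{true} null), we have the exact identity $\{\exists\,\phi\in\Phi^*: R_\mathcal{A}(\phi)>\alpha\}=\bigcup_{m=1}^{M}E_m$. It then suffices to show $\Pr[E_m]\le \delta/M$ for every $m$ and invoke a Bonferroni union bound, since $\sum_{m=1}^{M}\delta/M=\delta$; note that this step needs no independence among the p-values, either across or within chains.

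The heart of the argument is the per-chain bound, which exploits the fixed-sequence stopping rule (\ref{FST_stopping}) so that no within-chain multiplicity correction is required despite each chain testing up to $Q$ hypotheses. For chain $m$, I would let $q_m^{\dagger}=\max\{q: R_\mathcal{A}(\phi_{m,q})>\alpha\}$ denote the largest index whose null $\mathcal{H}_{m,q}$ is genuinely true; if no such index exists then $E_m$ cannot occur and $\Pr[E_m]=0$. The indices deposited into $\Phi^*$ by chain $m$ are exactly $\{q:q>q_m\}$ with $q_m$ as in (\ref{FST_stopping}). The key deterministic step is to argue that $E_m$ forces $q_m^{\dagger}>q_m$: if a true null $q$ (so $q\le q_m^{\dagger}$) is rejected, then $q>q_m$, whence $q_m^{\dagger}\ge q>q_m$. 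But $q_m^{\dagger}>q_m=\max\{q:p_{m,q}>\delta/M\}$ immediately gives $p_{m,q_m^{\dagger}}\le \delta/M$. Consequently $E_m\subseteq\{p_{m,q_m^{\dagger}}\le \delta/M\}$, and since $\mathcal{H}_{m,q_m^{\dagger}}$ is a true null, the super-uniformity (\ref{eq:superu}) of the Hoeffding p-value (\ref{p_value}) yields $\Pr[E_m]\le \Pr[p_{m,q_m^{\dagger}}\le \delta/M]\le \delta/M$.

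Combining the per-chain bounds through the union bound gives $\Pr[\exists\,\phi\in\Phi^*: R_\mathcal{A}(\phi)>\alpha]\le\delta$, and passing to complements establishes (\ref{proposition 1}). I would close by handling the degenerate case flagged after (\ref{FST_stopping}): when the MHT step returns an empty $\Phi^*$ and we fall back to $\phi^*=(0,1)$, the thresholds $\epsilon=0$ and $\lambda=1$ defer every query to the human expert, since no bounded score satisfies $U(x)<0$ or $C(x)>1$; hence $f_\phi(x)=y$ for all $x$ and $R_\mathcal{A}(\phi)=0\le\alpha$ holds deterministically, so the guarantee is preserved.

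The step I expect to be the main obstacle is making the per-chain containment $E_m\subseteq\{p_{m,q_m^{\dagger}}\le \delta/M\}$ airtight, i.e.\ pinning down that a single, well-chosen p-value, that of the topmost true null in the sequence, controls the entire chain error; this is precisely what allows fixed-sequence testing to avoid a within-chain correction. A useful observation to state explicitly is that the monotonicity of $R_\mathcal{A}$ in $q$ noted before (\ref{FST_stopping}) is \emph{not} needed for validity of this containment, the definition of $q_m^{\dagger}$ as a maximum over true nulls makes the argument go through for arbitrary risk profiles; monotonicity only guarantees that no reliable pair below the stopping index is discarded, which is a power and efficiency property rather than a correctness one.
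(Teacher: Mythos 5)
Your proof is correct and follows essentially the same route as the paper's: a union bound over the $M$ chains combined with the fixed-sequence observation that a chain can only err if the topmost true null $\mathcal{H}_{m,q_m^{\dagger}}$ (the paper's $q_m^*$) is itself rejected, whose probability is bounded by $\delta/M$ via super-uniformity of the Hoeffding p-value. Your explicit treatment of the empty-$\Phi^*$ fallback and the remark that monotonicity in $q$ is a power rather than a validity issue are correct additions, but do not change the argument.
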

\begin{proof}
See Appendix \ref{proof}.
%Due to (\emph{i}) the super-uniform property of the p-value and (\emph{ii}) the data-independent sequence design in  Fig.~\ref{fig:FST}, the $m$-th subset $\{ (\epsilon_m, \lambda_Q),...,(\epsilon_m, \lambda_{q_m+1}) \}$ cannot contain threshold that admits null hypothesis with probability at least $1-\delta/M$. Applying the union bound over $M$ parallel chains concludes the proof. We refer to \cite{ltt} for further details. 
\end{proof}

Proposition~\ref{prop} implies that any post-selected $\phi \in \Phi^*$ within the subset $\Phi^*$ produced by MHT  satisfies the statistical guarantee (\ref{FWER_goal}), as stated in the following corollary. 

\begin{corollary}[Alignment guarantee for the selected thresholds $\phi^*$] Any choice $\phi \in \Phi^*$, including $\phi^*$ as per (\ref{eq:phi_star}), satisfies the statistical guarantee (\ref{FWER_goal}).
\end{corollary}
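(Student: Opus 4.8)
The plan is to obtain the corollary as an immediate consequence of Proposition~\ref{prop}, by exploiting the fact that the guarantee (\ref{proposition 1}) holds \emph{simultaneously} over every element of the random subset $\Phi^*$. The central observation is that, once a single high-probability event controls the alignment risk of all members of $\Phi^*$ at once, any rule that selects a threshold pair from within $\Phi^*$ -- including the data-dependent cost-minimizing choice (\ref{eq:phi_star}) -- inherits the same guarantee, with no further correction required.

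Concretely, I would introduce the ``good event''
\begin{equation}
    E = \big\{\, \forall \phi \in \Phi^*:\ R_\mathcal{A}(\phi) \leq \alpha \,\big\},
\end{equation}
for which Proposition~\ref{prop} gives $\Pr[E] \geq 1-\delta$, the probability being over the calibration data $\mathcal{D}$. By construction (\ref{eq:phi_star}), the selected pair $\phi^*$ is an $\arg\min$ over $\Phi^*$ and hence always satisfies $\phi^* \in \Phi^*$. Therefore, on the event $E$ the constraint $R_\mathcal{A}(\phi^*)\leq\alpha$ holds, yielding the event inclusion $E \subseteq \{R_\mathcal{A}(\phi^*)\leq\alpha\}$. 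Monotonicity of probability then gives $\Pr[R_\mathcal{A}(\phi^*)\leq\alpha] \geq \Pr[E] \geq 1-\delta$, which is exactly the target guarantee (\ref{FWER_goal}). The same chain of reasoning applies verbatim to any fixed or data-dependent $\phi\in\Phi^*$, establishing the full statement.

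The step that requires the most care -- and that explains why Proposition~\ref{prop} is deliberately stated in simultaneous form -- is the treatment of the post-selection dependence: both $\Phi^*$ and $\phi^*$ are functions of the same dataset $\mathcal{D}$, so $\phi^*$ is a random, data-driven threshold rather than a fixed one. A per-hypothesis (non-uniform) guarantee would not survive such selection, but the uniform control in (\ref{proposition 1}) sidesteps the issue entirely, since it bounds the risk of every candidate in $\Phi^*$ on a common event. The only remaining detail is the degenerate case $\Phi^*=\emptyset$, handled by the convention $\Phi^*=\{(0,1)\}$: with $\epsilon=0$ the knowledge conditions in (\ref{eq:kt_edge}) and (\ref{eq:t_cloud}) can never be met, so every query is deferred to the human expert and $R_\mathcal{A}((0,1))=0\leq\alpha$ holds deterministically, ensuring (\ref{FWER_goal}) continues to hold in this boundary case.
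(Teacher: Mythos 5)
Your proposal is correct and follows exactly the route the paper intends: the paper's own proof of the corollary is the one-line remark that it ``follows directly from Proposition~\ref{prop},'' and your argument simply spells out the underlying reasoning (the good event $E$, the inclusion $\phi^*\in\Phi^*$, monotonicity of probability, and the harmless degenerate case $\Phi^*=\{(0,1)\}$). Nothing is missing and no different technique is used; your write-up is just a more explicit version of the same post-selection argument.
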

\begin{proof}
This follows directly from  Proposition~\ref{prop}.
\end{proof}

\begin{remark}[{\color{black}{Alignment guarantee under distribution shift between calibration dataset and test dataset}}] {\color{black}{The alignment guarantee (26) accounts that calibration and test datasets share the same distribution ${P}_{xy}$. Suppose now that the calibration dataset follows a generally different distribution $\hat{P}_{xy}$, and denote as $\epsilon_{\text{TV}} = D_{\text{TV}}(\hat{P}_{xy}, P_{xy})$ the total variation (TV) distance between the two distributions $\hat{P}_{xy}$ and $P_{xy}$. As shown in Appendix \ref{proof}, the alignment guarantee in (26) remains valid by replacing the target $\alpha$ with the tightened bound $\alpha - \epsilon_{\text{TV}}$ in (\ref{p_value}). Therefore, as long as the TV distance $\epsilon_{\text{TV}}$ is known or can be estimated\cite{zhao2024discriminative,han2018minimax}, the MHT-ERM framework remains valid, albeit at the cost of more conservative threshold selection.}}
\end{remark}

\begin{table}[htp]
	\setlength{\abovecaptionskip}{-2pt}
	\setlength{\belowcaptionskip}{-6pt}
	\begin{algorithm}[H]
		\caption{MHT-ERM}
		\label{alg:mht_optimization}
		{\normalsize
			\begin{algorithmic}[1]
               \State \textbf{Input:} Calibration dataset $\mathcal{D} = \{(x_n, y_n)\}_{n=1}^N$, target upper bound $\alpha$ for the misalignment cost, edge cost $L_\text{edge}$, cloud cost $L_\text{cloud}$, human cost $L_\text{human}$, tolerance level $\delta$, grid sizes $M$ and $Q$ for the grid $\Phi$ in (\ref{grid}) 
               \State \textbf{Output:} Optimized threshold $\phi^*$
               \State \textbf{Initialization:} Reliable subset $\Phi^* = \emptyset$
               \For{$m = 1, 2, \ldots, M$} \Comment{For each epistemic uncertainty threshold in the grid}
                   \State Fix $\epsilon_m = (m-1)/(M-1)$
                   \State Order confidence score thresholds: $\lambda_Q > \lambda_{Q-1} > \cdots > \lambda_1$
                   \For{{\color{black}{$q = Q, Q-1, \ldots, 1$}}} \Comment{Sequential testing along chain $m$}
                       \State Set $\phi_{m,q} = (\epsilon_m, \lambda_q)$
                       \State Compute empirical misalignment risk $\hat{R}_{\mathcal{A}}(\phi_{m,q}|\mathcal{D})$ via  (\ref{empirical_misalignment})
                       \State Compute p-value $p_{m,q}$ via (\ref{p_value})
                       \If{$p_{m,q} \leq \delta/M$} \Comment{Bonferroni correction}
                           \State Add $\phi_{m,q}$ to reliable subset: $\Phi^* = \Phi^* \cup \{\phi_{m,q}\}$
                       \Else
                           \State {break} \Comment{Stop testing in this chain }
                       \EndIf
                   \EndFor
               \EndFor
               \State Compute empirical cost $\hat{R}_{\mathcal{L}}(\phi|\mathcal{D})$ following (\ref{empirical_cost}) for all $\phi \in \Phi^*$
               \State Select optimized threshold: $\phi^* = \arg\min_{\phi\in\Phi^*} \hat{R}_{\mathcal{L}}(\phi|\mathcal{D})$
               \State{\textbf{Return}} $\phi^*$
            \end{algorithmic}}
	\end{algorithm}
\end{table}

\section{Experiments}\label{simulation}
In this section, we evaluate the effectiveness of the reliable edge-cloud-expert cascaded system introduced in Sec. \ref{proposed method} through numerical experiments. The code to reproduce all the results in the following is available at \url{https://github.com/qiushuo0913/reliable_LLM}.

\subsection{Dataset}
For evaluation, we adopt the TeleQnA dataset\footnote{The TeleQnA dataset is publicly available and open-sourced at \url{https://github.com/netop-team/TeleQnA}.}, the first comprehensive benchmark specifically designed to assess telecommunications knowledge \cite{teleQnA}. The dataset comprises $10,000$ multiple-choice questions, each featuring four or five possible answers with only one correct option. The data points are systematically partitioned across five categories: research publications ($45\%$), standards specifications ($20\%$), research overview ($20\%$), standards overview ($10\%$), and lexicon ($5\%$). Accordingly, the corresponding questions span various difficulty levels and cover diverse telecommunications sources, including technical specifications from standardization bodies, such as 3GPP, IEEE, and ITU, research publications from high-impact venues, standards overview documents, and telecommunications lexicon materials.

The TeleQnA dataset directly aligns with the use case of telecommunications query processing discussed in Sec.~\ref{sec:intro} and illustrated in Fig.~\ref{fig:cascaded_model}. In real-world telecommunications systems, technical queries exhibit varying complexity levels corresponding to the dataset's hierarchical categorization. Simple lexicon-based questions (e.g., “What does EIRP stand for?”) represent queries that edge models can handle efficiently with low epistemic uncertainty and high confidence. Moderately complex questions from research overview and standards overview categories, such as questions about software-defined radio techniques or Bluetooth frequency bands, may benefit from cloud model capabilities. The most challenging questions from research publications and standards specifications categories, such as complex MIMO channel calculations or specific 3GPP Release parameters {\color{black}(e.g., ``what is the RRC buffer size for a UE?'')}, often require human expert knowledge due to their technical intricacy. A representative example from the TeleQnA dataset is provided in Appendix \ref{appendix:dataset_example}.

\subsection{Setup}\label{simulation_setting}
We evaluate the proposed MHT-ERM methodology, as well as relevant benchmarks, under two distinct deployment scenarios\footnote{All LLM models used in the experiment are publicly available and open-sourced at \url{https://huggingface.co/}.}:
\begin{itemize}
    \item \emph{Conventional edge-cloud deployment:} The edge employs the Qwen2-1.5B-instruct model, while the cloud utilizes Qwen2-7B-instruct \cite{qwen2_1.5b_hf, qwen2_7b_hf}. Both models operate without reasoning.
    
    \item \emph{Reasoning-enhanced cloud deployment:} The edge maintains the Qwen2-1.5B-instruct model, while the cloud is upgraded to Qwen3-4B with enhanced reasoning capabilities. 
\end{itemize}

For the evaluation of epistemic uncertainty and confidence scores, following \cite{cambridge_paper}, the logistic regressor in the Bayesian ensemble method described in Sec. \ref{obtain_u_and_c} is trained using a held-out data set $\mathcal{D}^\text{val}$ consisting of  $100$ data pairs, while the prompt variation approach uses $K = 10$ different permutations so as to yield multiple self-confidence scores (see Appendix~\ref{appendix:confidence score} for details). {\color{black}{The logistic regressor has a small number of trainable parameters compared to the full LLM, and the Gaussian prior over its weights provides inherent regularization \cite{simeone2022machine,bishop2006pattern}, making the number of data points $|\mathcal{D}^{\text{val}}| = 100$  sufficient to avoid overfitting (see Appendix \ref{additional_experiments} for empirical validation).}} The computational cost incurred by an LLM in the experiments scales linearly with the number of its parameters \cite{cambridge_paper}, and we accordingly set the costs as  $L_{\text{edge}}=1.5$, as well as  $L_{\text{cloud}}=7$ for Qwen2-7B and $L_\text{cloud}=4$ for Qwen3-4B. We also set $L_\text{human}=10$. {\color{black}{We note that this scalar cost model involves no loss of generality, as any multi-dimensional cost objective can be reduced to a scalar via an appropriate scalarization, such as a weighted sum\cite{miettinen1999nonlinear}.}}{ For prompt-based inference using the ensemble method, the cost accounts for multiple model calls, i.e., the costs above are multiplied by $K$ for the edge and cloud models. }

% { I STILL DON'T GET THIS -- DOES IT MEAN THAT THE COST BECOMES $K \cdot L$? We ignore the cost associated with evaluations for confidence and epistemic uncertainty scores when adopting Bayesian learning for white-box models (Sec .~\ref {Bayes_white_box}); while scaling up all the costs when considering prompt-based inference for black-box models (Sec .~\ref {ensembel_black_box}) by multiplying the values by $K$.}

The calibration dataset size is $N = 100$, the grid size is defined by $M=5$ and  $Q=100$, the misalignment upper bound is set to $\alpha = 0.3$, and the tolerance level is $\delta=0.05$. We use test data set $\mathcal{D}_{\text{test}} = \{(x_n, y_n)\}_{n=N+1}^{N+N_{\text{test}}} \mathop{\sim}\limits^{\text{i.i.d.}} P_{xy}$ of size $N_\text{test}=1000$,  and all the results in this section are reported after averaging over $200$ independent experiments. The experiments are carried out with a single A100 GPU.

\subsection{Benchmarks}
To highlight the benefits of MHT-based threshold selection and of the specific testing strategy introduced in Sec. \ref{proposed method}, we consider the following baselines:
\begin{itemize}
    \item \emph{Edge-only}: Edge-only processes all queries exclusively using the edge server.
    \item \emph{Cloud-only}: Cloud-only routes all queries directly to the cloud server.
    \item \emph{Human-only}: Human-only relies exclusively on human operators to answer all queries.
    \item \emph{Conventional-ERM} (C-ERM): As reviewed in Sec.~\ref{background}, C-ERM uses exhaustive grid search over the discrete parameter space $\Phi$ in order to address problem (\ref{constrain_problem}) with empirical estimates (\ref{empirical_misalignment}) used in lieu of the true expectations. 
    \item \emph{MHT-ERM with a Global Bonferroni Correction} (MHT-ERM-B): MHT-ERM-B does not leverage the monotonicity property of the risk $R_\mathcal{A}(\phi)$ with respect to the threshold $\lambda$, hence testing $M Q$ null hypotheses $\{\{\mathcal{H}_{m,q}\}_{m=1}^M\}_{q=1}^Q$ in parallel. In order to ensure the condition (\ref{FWER_goal}), MHT-ERM-B applies Bonferroni correction across all $MQ$ hypotheses, so that each test is carried out with adjusted tolerance level $\delta/(M Q)$. Accordingly, the resulting subset can be written as 
    \begin{equation}
    \Phi^*_\text{B} = \Big\{(\epsilon_m, \lambda_q)\in \Phi : p_{m,q} \leq \frac{\delta}{M Q} \Big\}.
    \end{equation}
    MHT-ERM-B then addresses problem (\ref{eq:phi_star}) with subset $\Phi_\text{B}^*$ in place of $\Phi^*$.
    % employs Bonferroni correction for MHT instead of the proposed fixed sequence testing procedure (see Fig. \ref{fig:FST}). {{Accordingly, MHT-ERM-B constructs a set of validation threshold values by applying uniform risk allocation across all thresholds in the grid $\Phi$. Then, each threshold is tested with risk level $\delta/(M \cdot Q)$ in order to address problem (\ref{constrain_problem})}}.
    
    % \item {Conventional-ERM (Gradient Descent)} (C-ERM (GD)): C-ERM (GD) uses gradient descent optimization as presented in Algorithm~\ref{alg:cascaded_optimization}.
    
\end{itemize}

{\color{black}{The comparison of computational complexity of the aforementioned methods is as follows. C-ERM performs an exhaustive grid search over all $MQ$ candidate thresholds with complexity of order $O(MQ)$. MHT-ERM-B applies a global Bonferroni correction and tests all $MQ$ hypotheses in parallel, also with complexity of order $O(MQ)$. The proposed MHT-ERM exploits the monotonicity of $R_A(\phi_{m,q})$ with respect to $\lambda_q$ to perform fixed-sequence testing along $M$ independent chains with early stopping. The worst-case order of the procedure is still $O(MQ)$, though the average complexity is generally smaller.}}

\subsection{Simulation Results for Conventional Edge-Cloud Deployment}
In this subsection, we evaluate the performance of the proposed MHT-ERM scheme and of the benchmarks discussed above in terms of misalignment rate and system cost by focusing on a conventional edge-cloud deployment with no reasoning. 

\subsubsection{Impact of Score Evaluation}
Fig.~\ref{fig:boxplot} presents misalignment and cost measures for all baselines. The bars show the maximum values after ignoring outliers, which are defined as the values that exceed $1.5$ interquartile range \cite{box_plot}. Standard box plots \cite{box_plot} are also shown in the figure for further detailed statistical information. We have also plotted the target  $1-\delta=0.95$ quantile value in the alignment requirement (\ref{FWER_goal}) as a horizontal line for the misalignment cost. 

%together with box plots across $200$.  The box plots report $1-\delta=0.95$-quantile values using colored horizontal lines in the misalignment rate. This allows one to verify the condition in (\ref{FWER}) by comparing the horizontal quantile markers with the threshold $\alpha=0.3$ (red dashed line).}

% { The figure plots bars representing average values together with box plots across $200$ independent experiments. The box plots report $1-\delta=0.95$-quantile values using whiskers. This allows one to verify the condition in (\ref{FWER}) by comparing the termination point of the lower whiskers with the threshold $\alpha=0.3$ (red dashed line).}

Considering first the extreme baseline strategies that use only one type of informational resource, we observe that edge-only decisions achieve minimal computational cost, but fail to meet the alignment requirement (\ref{FWER_goal}). In contrast, human-only decisions ensure perfect reliability,  but incur maximum cost. Cloud-only decisions represent an intermediate solution with moderate alignment and cost, yet they still fail to meet the requirement (\ref{FWER_goal}).

\begin{figure}[t]
    \centering
    \subfigure[Bayesian learning for white-box models]{
    \includegraphics[width=9cm]{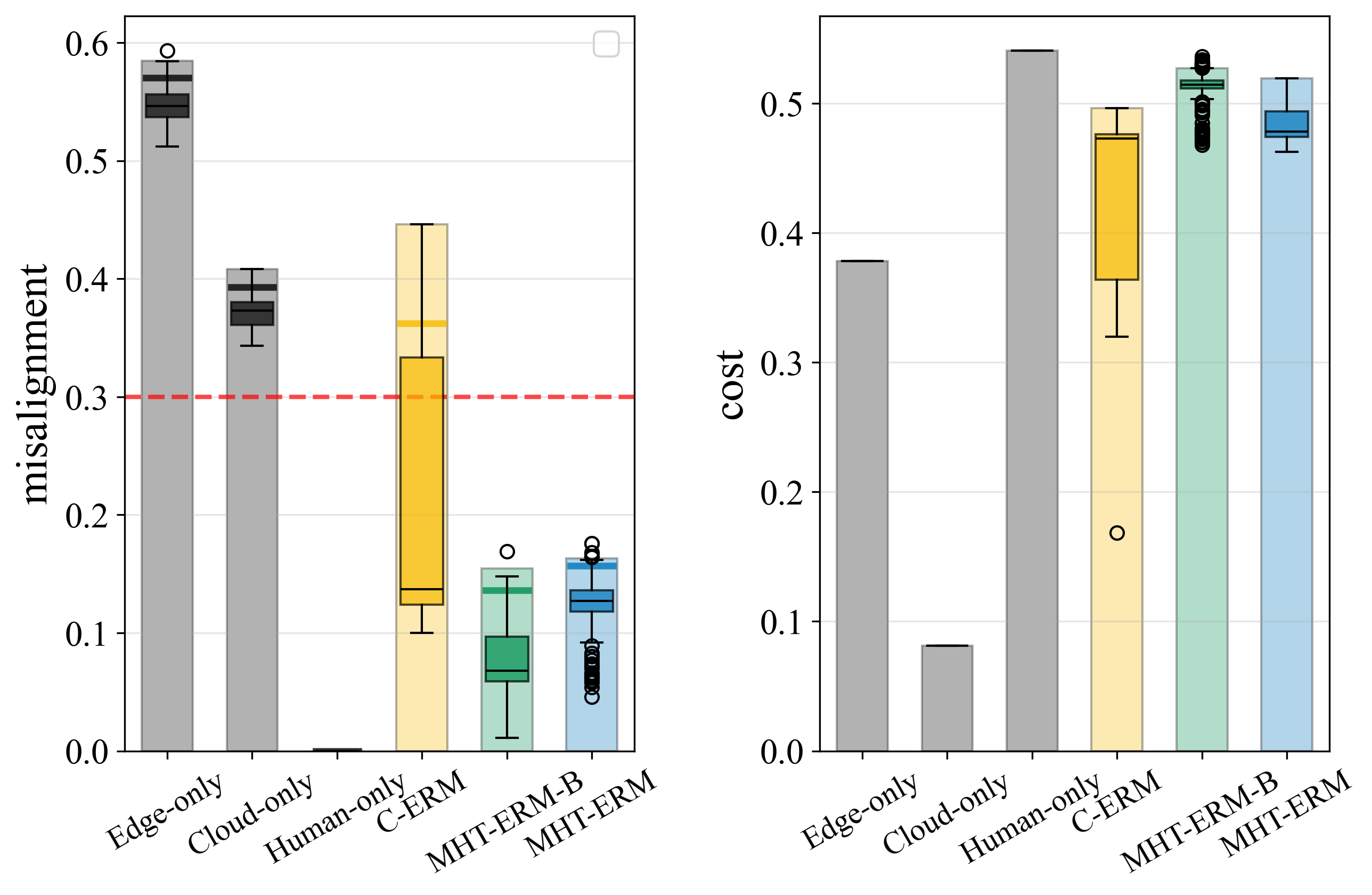}
    \label{fig: bayesian_box}
    \vspace{-0.5cm}
    }
    
    \subfigure[Prompt-based inference for black-box models]
    {
    \centering
    \includegraphics[width=9cm]{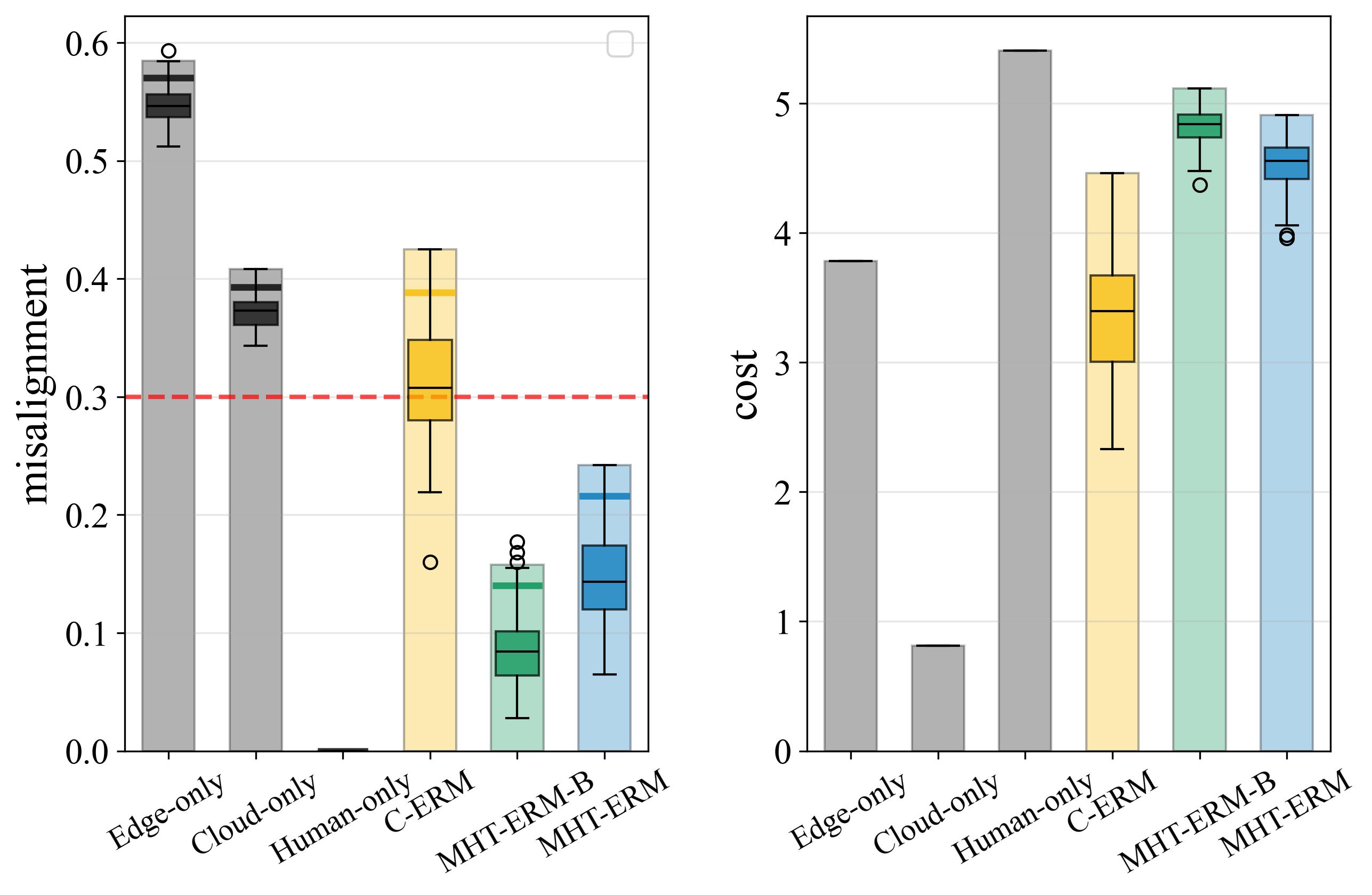}
    \label{fig: ensemble_box}
    }
    \vspace{-0.3cm}
    \caption{{\color{black}{Misalignment and corresponding cost for edge-only, cloud-only, and human-only schemes, as well as for the cascading systems designed via C-ERM, MHT-ERM-B, and MHT-ERM.}} We set the target misalignment risk in (\ref{constraint}) to $\alpha=0.3$ (dashed line) and the target reliability in (\ref{FWER_goal}) to $1-\delta=0.95$. The colored horizontal lines mark the $1-\delta=0.95$-quantile values of the misalignment rate. Maximal values in misalignment performance and cost performance are reported within the $1.5$ interquartile range (IQR) range \cite{box_plot} across $200$ independent experiments.}
    \label{fig:boxplot}
\end{figure}

%{ Moving on to schemes that { implement decision making at edge, cloud, or human level,} both MHT-ERM and MHT-ERM-B seem to consistently ensure misalignment requirements, providing reliable statistical guarantees, while C-ERM generally cannot. In fact, C-ERM does not come with any mechanism to provide theoretical guarantees for constraint violation, while its optimization on the calibration set and evaluation on the test set leads to inadequate misalignment with the desired risk control objectives. Furthermore, MHT-ERM-B is more conservative than MHT-ERM, yielding smaller candidate sets, thus { having} higher costs.} 

Cascading methods, designed using any of the schemes C-ERM, MHT-ERM-B, and MHT-ERM, achieve a lower cost than human-only decisions. However, only the proposed MHT-based schemes satisfy the alignment constraint (\ref{FWER_goal}). Furthermore, the proposed MHT-ERM is seen to achieve a lower cost than MHT-ERM-B, demonstrating the importance of incorporating domain knowledge in the design of the testing strategy, as done by the proposed approach illustrated in Fig.~\ref{fig:FST}.

Comparing Fig.~\ref{fig:boxplot}(a) and (b) allows us to analyze the impact of the choice of specific epistemic uncertainty and confidence scores.  Specifically, scores obtained via Bayesian learning for white-box models are seen to achieve lower misalignment rates compared to prompt-based inference for black-box models. This superiority stems from direct access to model logits, which enable a more precise uncertainty estimation through the variance of the posterior distribution over model weights. The resulting uncertainty measures better reflect the model's epistemic knowledge gaps, leading to more accurate query processing decisions and improved alignment with expert judgments.

%In contrast, the black-box approach relies on self-evaluation prompts to estimate confidence, which provides a simpler implementation requiring only API access. However, to obtain meaningful uncertainty estimates, this approach necessitates ensemble averaging across multiple prompt evaluations, resulting in $K$-fold computational overhead.}

% a comprehensive performance comparison across all baselines, where the $1-\delta=0.95$-quantile values in misalignment performance are displayed as bars to verify the guarantee of condition in (\ref{FWER}), and maximum values in cost performance are shown within the $1.5$ interquartile range (IQR) \cite{box_plot} across $200$ independent experiments. The target coverage misalignment rate is $\alpha=0.3$ (red dashed line).

\subsubsection{Impact of Calibration Data Size}
All cascading methods rely on the availability of the held-out calibration data $\mathcal{D}$ to optimize the threshold $\phi$. Here we investigate the impact of the size of the calibration data, $N$.  As shown in Fig. \ref{fig:impact_cal_size}, both MHT-based methods and C-ERM demonstrate improved performance with larger calibration sets in terms of both misalignment and cost. Specifically, C-ERM starts to satisfy the alignment constraint with a sufficiently large amount of calibration data ($N > 200$), but it exceeds the target misalignment  (attaining a misalignment value $0.4 > \alpha=0.3$) when fewer calibration data points are available ($N=10$). In contrast, MHT-based schemes always satisfy the alignment constraint irrespective of the size of the calibration data. 

\subsubsection{Impact of Misalignment Target $\alpha$}
{\color{black}{We now turn our attention to the impact of the misalignment target level $\alpha$. We set the size of calibration data to $N=100$. Fig. \ref{fig:impact_constraint} shows the results for $\alpha$ ranging from $0.1$ to $0.7$. Cascading via MHT-based threshold selection always satisfies the alignment constraint irrespective of the target level $\alpha$. This is unlike C-ERM, which fails to satisfy the constraint in the regime of practical interest ($\alpha < 0.5$). We note that for $\alpha$ below the error rate of the cloud model (approximately $0.3$ for Qwen2-7B-instruct), both MHT-based methods defer most queries to the human expert, and the sequential testing advantage of MHT-ERM over MHT-ERM-B becomes more pronounced as $\alpha$ increases into the non-trivially binding regime.}}

\begin{figure}[htp]
    \centering
    \subfigure[Bayesian learning for white-box models]{
    \includegraphics[width=7.5cm]{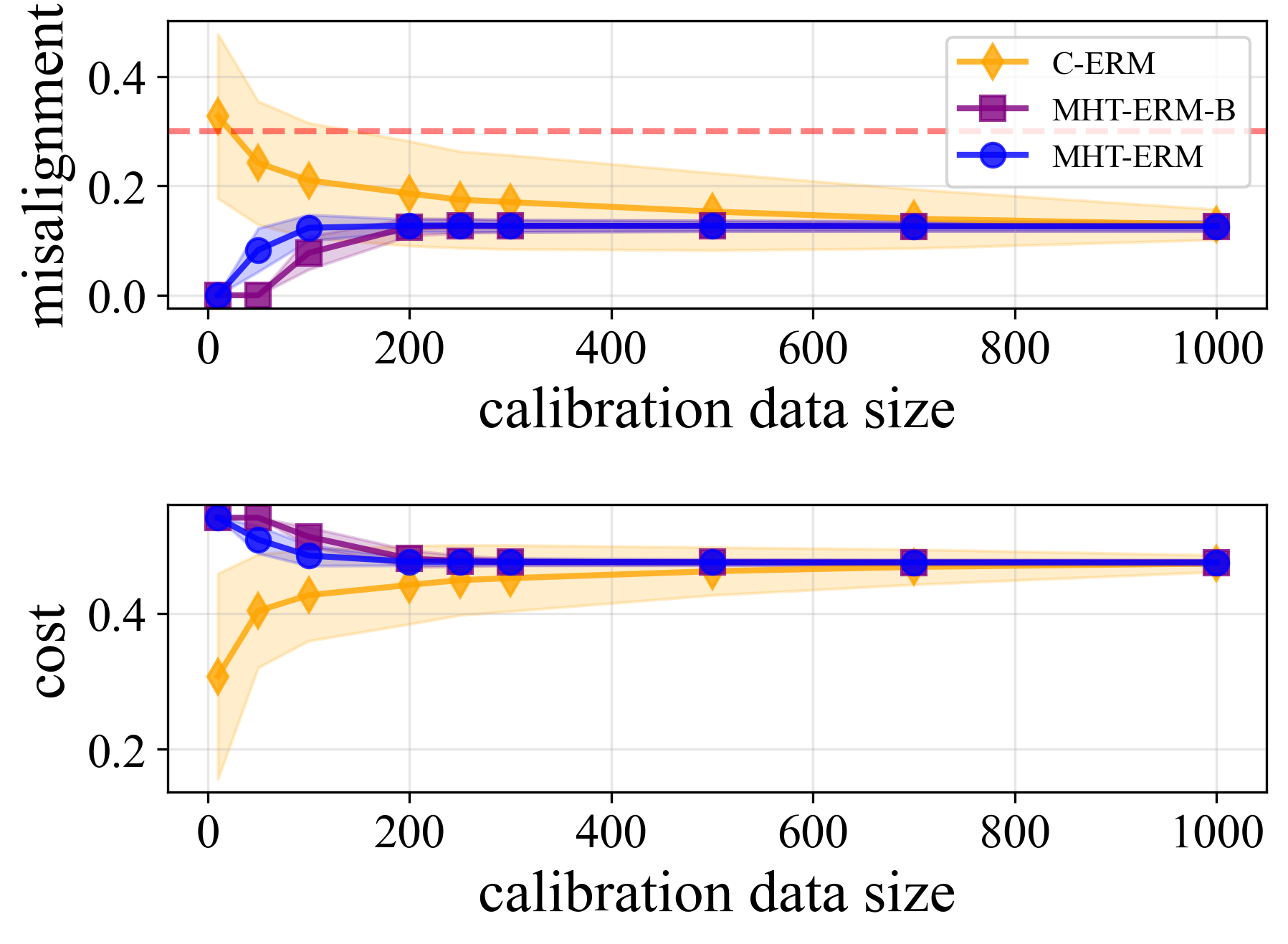}
    \vspace{-0.5cm}
    }
    % \label{fig: bayesian_}
    \subfigure[Prompt-based inference for black-box models]
    {
    \centering
    \includegraphics[width=7.5cm]{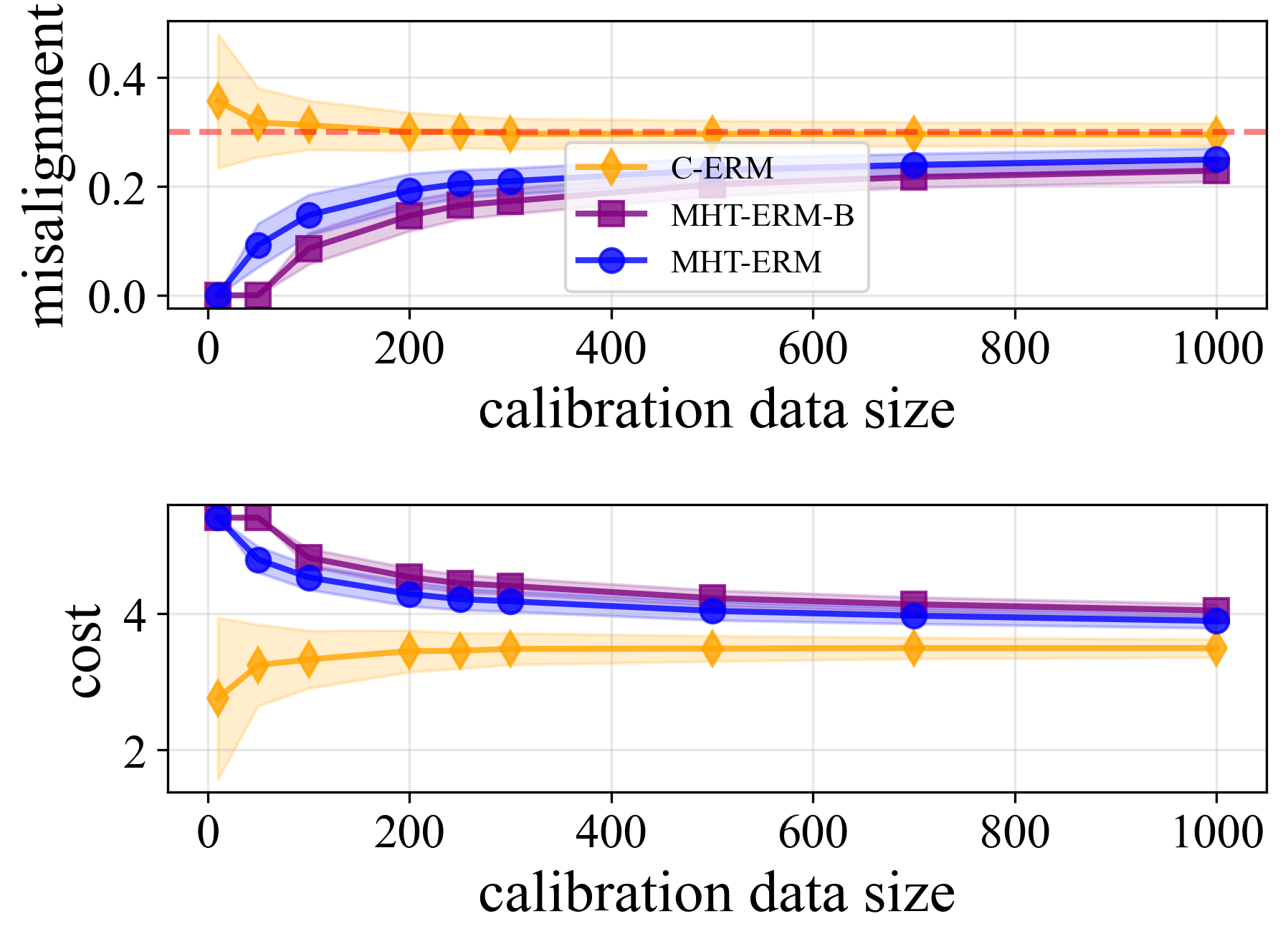}
    % \label{fig: ensemble_box}
    }
    % \vspace{-0.3cm}
    \caption{Misalignment and corresponding cost for the cascading systems with thresholds chosen via C-ERM, MHT-ERM-B, and  MHT-ERM under different values of calibration dataset size. We set the target misalignment risk in (\ref{constraint}) to $\alpha=0.3$ (dashed line) and target reliability in (\ref{FWER_goal}) to $1-\delta=0.95$. The results are averaged over $200$ independent experiments (shaded bar on plots shows one standard deviation on both sides).}
    \label{fig:impact_cal_size}
\end{figure}

\begin{figure}[htp]
    \centering
    \subfigure[Bayesian learning for white-box models]{
    \includegraphics[width=7.5cm]{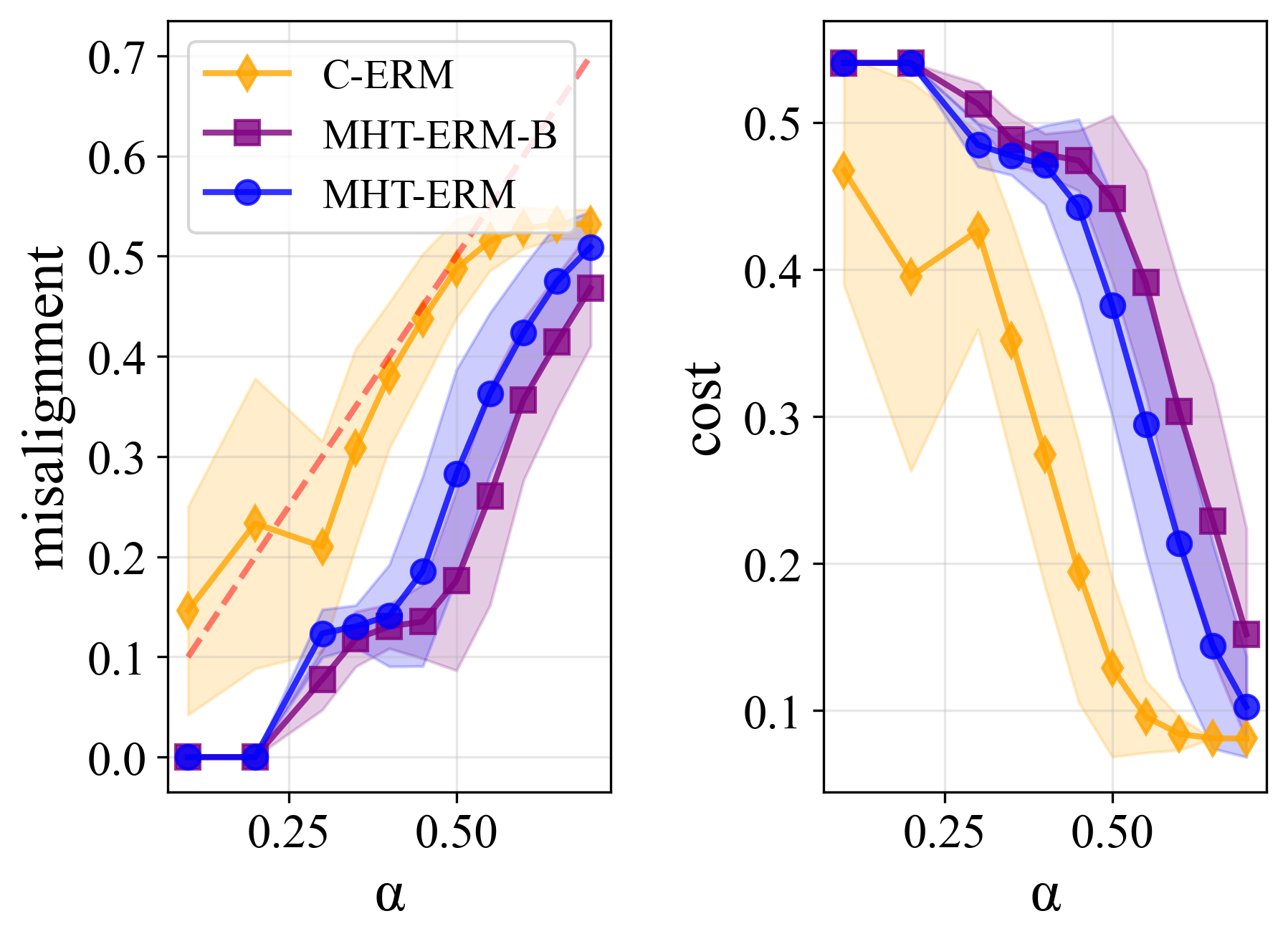}
    \vspace{-0.5cm}
    }
    % \label{fig: bayesian_}
    \subfigure[Prompt-based inference for black-box models]
    {
    \centering
    \includegraphics[width=7.5cm]{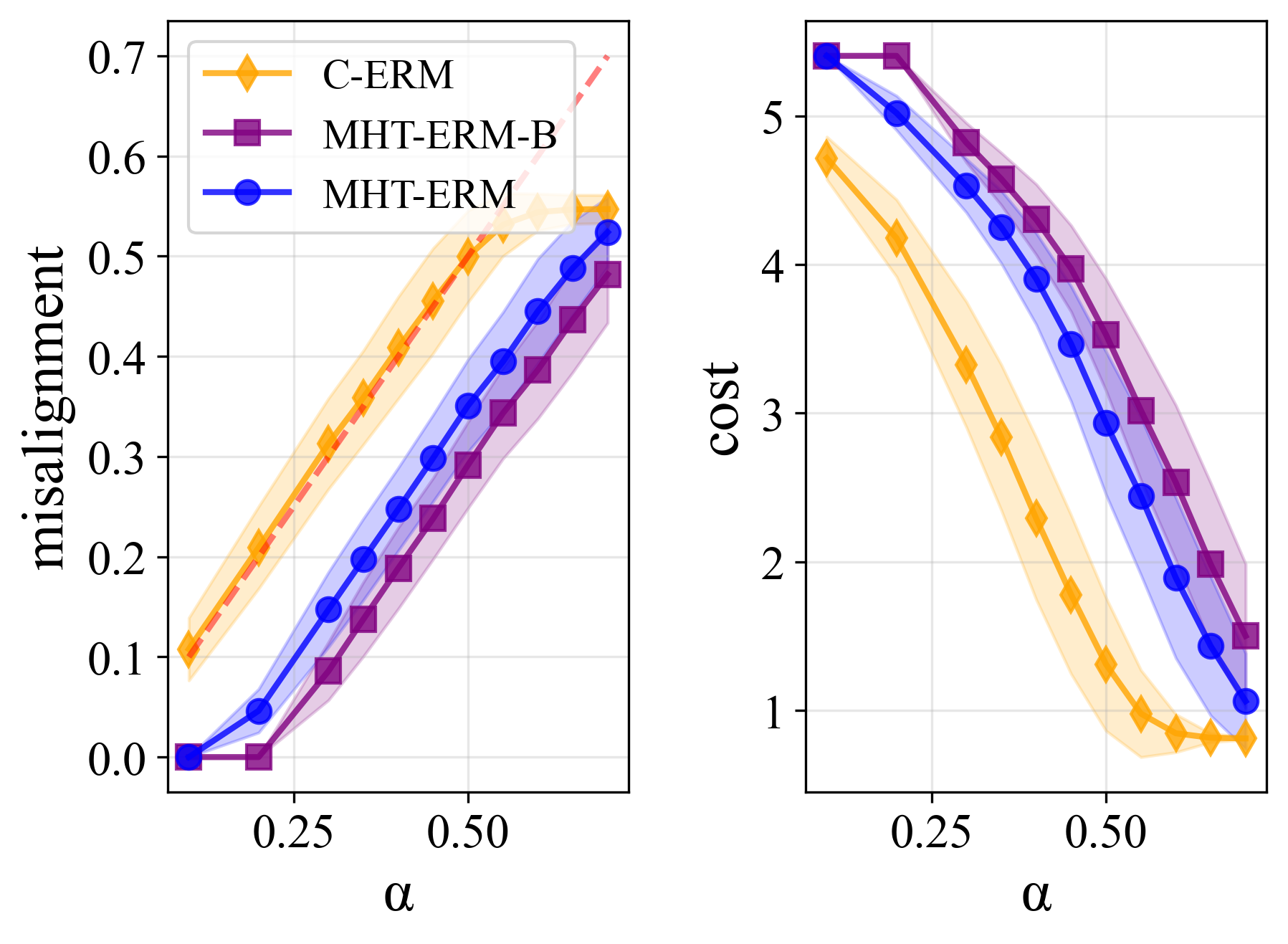}
    % \label{fig: ensemble_box}
    }
    % \vspace{-0.3cm}
    \caption{{\color{black}{Misalignment and corresponding cost for the cascading systems with thresholds chosen via C-ERM, MHT-ERM-B, and MHT-ERM under different values of misalignment upper bound.}} We set the target misalignment risk in (\ref{constraint}) from $\alpha=0.1 $ to $\alpha=0.7$ (dashed line) and target reliability in (\ref{FWER_goal}) to $1-\delta=0.95$. The results are averaged over $200$ independent experiments (shaded bar on plots shows one standard deviation on both sides).}
    \label{fig:impact_constraint}
\end{figure}

% As shown in Fig. \ref{fig:impact_constraint}, only the MHT-based methods (MHT-ERM and MHT-ERM-B) seem to consistently satisfy the alignment requirements across the entire range of $\alpha$ values, as evidenced by both the blue and purple curves (including their shaded confidence regions) remaining entirely below the red dashed line. In contrast, the C-ERM exhibits constraint violations when the misalignment bounds are restrictive, given that its shaded confidence region exceeds the red dashed line. As the misalignment requirements are progressively relaxed (increasing $\alpha$), the C-ERM eventually satisfies the misalignment constraints.

\begin{figure}[htp]
    \centering
    \subfigure[Bayesian learning for white-box models]{
    \includegraphics[width=7.5cm]{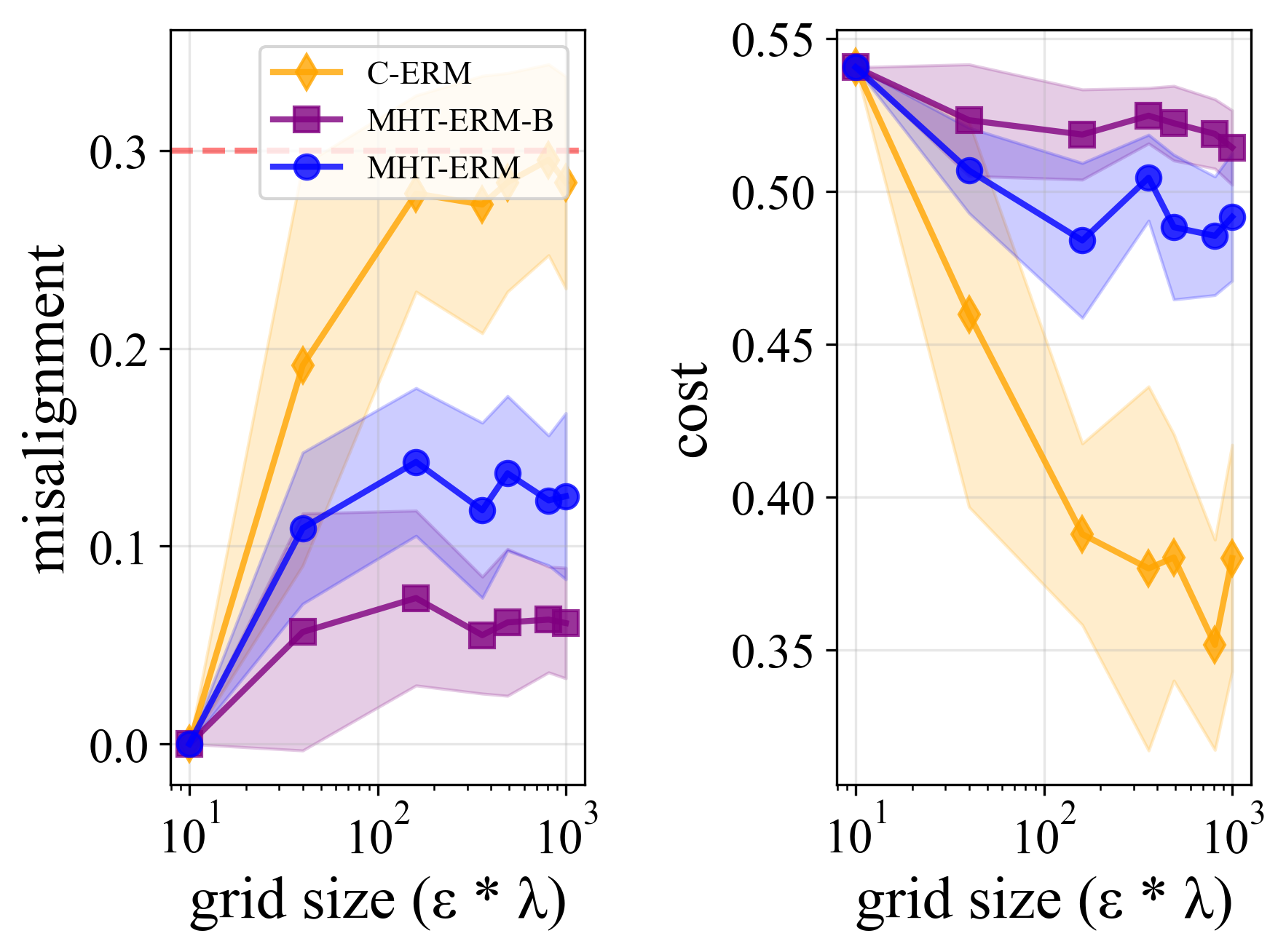}
    \vspace{-0.5cm}
    }
    % \label{fig: bayesian_}
    \subfigure[Prompt-based inference for black-box models]
    {
    \centering
    \includegraphics[width=7.5cm]{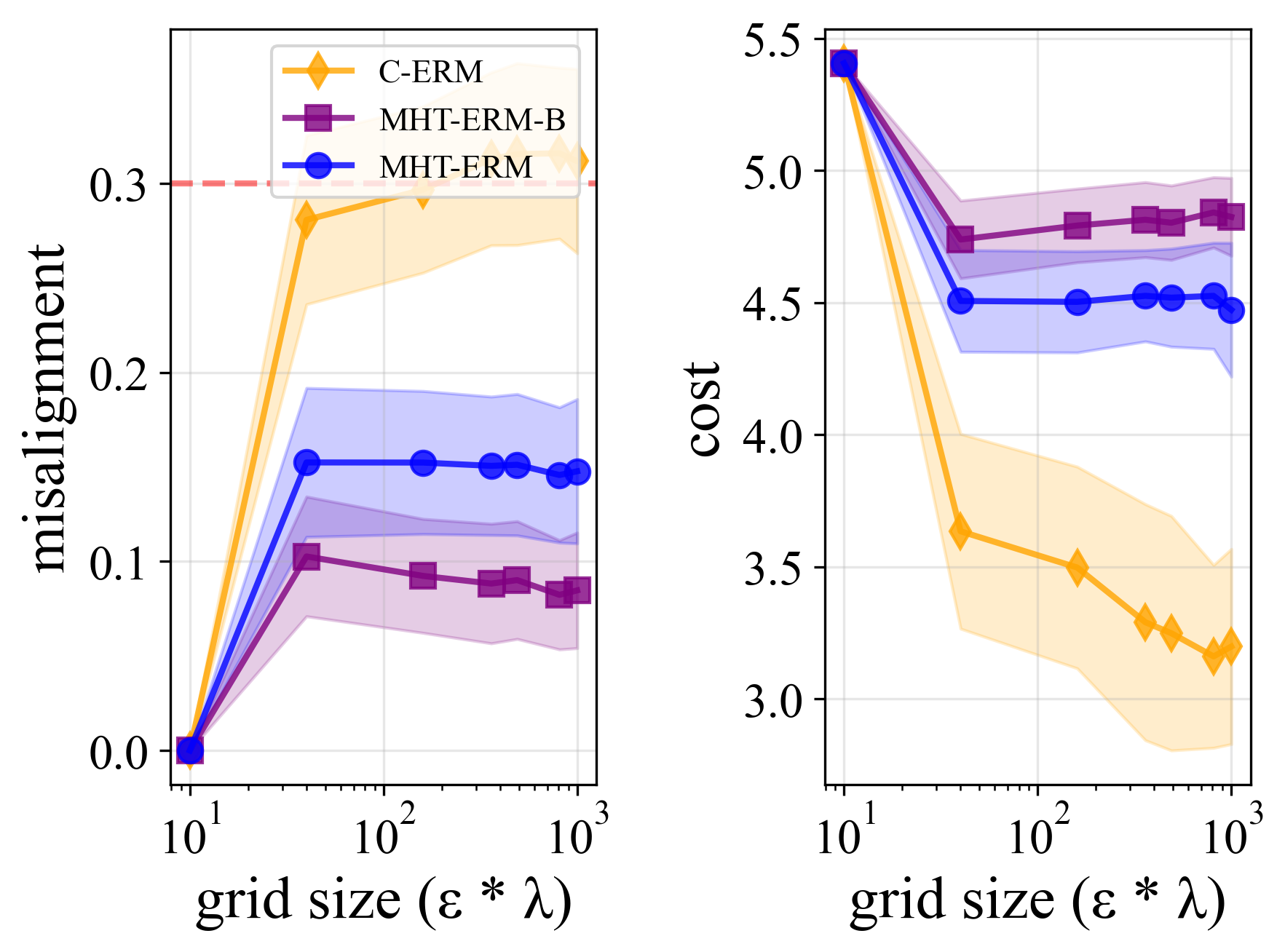}
    % \label{fig: ensemble_box}
    }
    % \vspace{-0.3cm}
    \caption{Misalignment and corresponding cost for the cascading systems with thresholds chosen via C-ERM, MHT-ERM-B, and MHT-ERM under different values of grid sizes. We set the target misalignment risk in (\ref{constraint}) to $\alpha=0.3$ (dashed line) and target reliability in (\ref{FWER_goal}) to $1-\delta=0.95$. The results are averaged over $200$ independent experiments (shaded bar on plots shows one standard deviation on both sides).}
    \label{fig:impact_grid_size}
\end{figure}

\subsubsection{Impact of Grid Size}
Finally, we investigate the impact of the grid size for ERM-based approaches in Fig. \ref{fig:impact_grid_size}. We set $\alpha=0.3$ and $N=100$. It is observed that the cost gap between MHT-ERM and MHT-ERM-B becomes more pronounced with increased grid size, demonstrating the importance of the proposed sequential design. Furthermore, as the grid size expands, C-ERM suffers from overfitting to the calibration data,  leading to worse constraint violation performance.

{\color{black}{To further validate the generalizability of the proposed framework, additional experiments on ORAN-Bench-13K \cite{gajjar2025oranbench, saenko2026twinpass} are presented in Appendix \ref{additional_experiments}.}}

% naturally increases with larger grid sizes, reflecting the increasingly conservative nature of Bonferroni correction when applied to a growing number of hypotheses. 

\subsection{Simulation Results for Reasoning-Enhanced Cloud Deployment}\label{sim:reasoning}

We now adopt a reasoning model at the cloud by controlling the thinking budget of the Qwen3-4B reasoning model, which is measured by the number of thinking tokens in the reasoning procedure.  Fig. \ref{fig:qwen3} presents the performance of MHT-ERM as a function of the thinking budget when using Bayesian learning for evaluating the epistemic uncertainty and confidence scores. The left panel of the figure also shows the accuracy improvement of the cloud model with increased thinking budget, which increases from 70.4\% to 71.6\%.

The left panel shows the misalignment rate as a function of the thinking budget. The blue bars report the mean values of the misalignment rate,   with error bars indicating the corresponding standard deviation. The purple horizontal lines mark the $1-\delta = 0.95$ quantile values of the misalignment rate. It is verified that, irrespective of the thinking budget, MHT-ERM achieves lower misalignment than the target $\alpha=0.25$ (red dashed line) demonstrating its statistical validity. Furthermore, MHT-ERM exhibits enhanced alignment under increased thinking budget with a marginal cost increase. 

% {\color{red} here $L_\text{cloud}$ may need to be  changed as per thinking budget, but let's not change anything and keep it as vague as possible :) }
%This allows one to verify the condition in (\ref{FWER}) by comparing the purple horizontal lines with the threshold $\alpha=0.25$ (red dashed line). As the thinking budget increases, misalignment rates decrease while system costs remain relatively stable. This can be explained by the underlying mechanism shown in the green curve: the cloud model accuracy exhibits a clear upward trend with increased budget.

\begin{figure}[htbp]
    \centering
    \includegraphics[width=9cm]{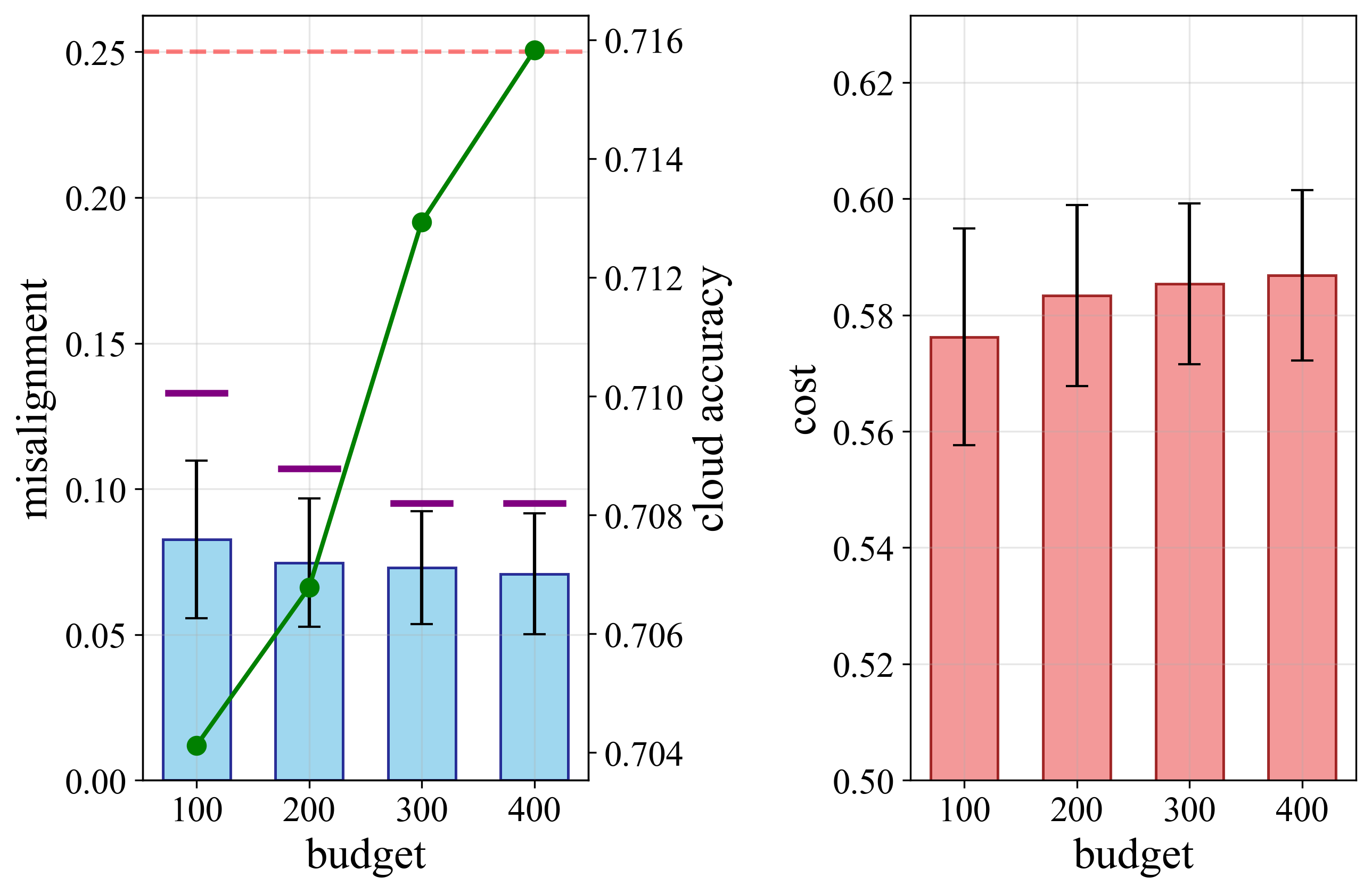}
    \caption{Misalignment and corresponding cost for the cascading systems with thresholds chosen via MHT-ERM under a reasoning-enhanced cloud deployment as a function of thinking budget for $\alpha=0.25$ (dashed line in the left panel). The bars, reporting the mean, are augmented with $95\%$-quantile (purple lines). The results are averaged over $200 $ independent experiments, with error bar indicating one standard deviation.}
    \label{fig:qwen3}
    % \vspace{2cm}
\end{figure}

\section{Conclusions}\label{conclusion}
Large language models are emerging as key enablers of automation in telecommunications, yet their deployment must carefully balance inference cost, latency, and reliability constraints. This work develops a statistically principled cascading framework in which lightweight edge models, powerful cloud models, and human experts collaborate through rigorous routing decisions to minimize average processing costs while guaranteeing alignment with expert judgments. The core contribution lies in reformulating threshold selection as a multiple hypothesis testing (MHT)  problem, thereby providing finite-sample guarantees on misalignment risk.  Experimental validation on the TeleQnA dataset demonstrates the effectiveness of our approach, showing cost reductions while maintaining target reliability levels.

Future research directions include investigating adaptive threshold mechanisms that can respond to evolving data distributions inherent in dynamic wireless environments, {\color{black}{designing query-dependent thresholds that can adapt the routing decision to the specific type or difficulty of each query, considering tier-specific thresholds that account for the heterogeneous capabilities of the edge and cloud models,}} exploring multi-objective optimization scenarios, and addressing challenges associated with limited calibration datasets.

% { Please check the format of all references that follows to satisfy the usual IEEE requirements}

\bibliographystyle{ieeetr}
\bibliography{reference}
% % \clearpage
% \begin{thebibliography}{1}
% \small

% \bibitem{}
% \bibitem{scheduling1}
% R. M. Sohaib, S. T. Shah, O. Onireti, Y. Sambo, Q. H. Abbasi, and M. A. Imran, “DRL-based joint resource scheduling of eMBB and URLLC in O-RAN,” in \emph{Proc. IEEE Int. Conf. on Commun. (ICC)}, Denver, 2024.

% \bibitem{nokia}
% https://github.com/nokia/wireless-suite/tree/master/wireless

% \end{thebibliography}

% \newpage
\appendices

\section{Implementation Details on Confidence Score}\label{appendix:confidence score}
{\color{black}{This appendix provides detailed implementation procedures for both the white-box Bayesian learning method and the black-box self-confidence score estimation method used in our experiments.}}

\begin{tcolorbox}[colback=gray!10!white,colframe=gray!75!black,title={{Box A-1: Prompt Templates of Self-Evaluation Implementation for Qwen2-1.5B}}, label={box:qwen2_prompt}]
{\textbf{Stage 1: Answer Generation}}\\
Please provide the answers to the following telecommunications related multiple choice questions. The questions will be in a JSON format, the answers must also be in a JSON format as follows:\\
\{\\
“question 1”: \{\\
“question”: question,\\
“answer”: “option \{answer id\}: \{answer string\}”\\
\},\\
...\\
\}\\
Here are the questions: [question]

{\textbf{Stage 2: Confidence Evaluation}}\\
Please act as an impartial telecommunications expert and evaluate the quality of the answer provided by an AI assistant to the user question displayed below. Your evaluation should assess the probability that the given answer to a telecommunications question is correct. Return ONLY a number BETWEEN 0 AND 1, where:
\begin{itemize}
\item 0 means definitely incorrect
\item 1 means definitely correct
\end{itemize}
Question: [question]\\
Answer: [answer]\\
Return your response in the following JSON format ONLY:\\ 
\{“probability”: 0.X\}
\end{tcolorbox}

{\color{black}{For the Bayesian learning method, in our multiple-choice setting, the model output $y$ corresponds to the index of the selected option (e.g., ``1'', ``2'', ``3'', ``4 '', or ``5''). Accordingly, the quantity $P_w(y|x)$ is evaluated as the conditional probability of the \emph{single output token} corresponding to the option index given the past tokens, rather than the joint probability of a full answer sentence. Therefore, this approach avoids any length-related bias in the confidence score estimation.}}

\begin{tcolorbox}[colback=gray!10!white,colframe=gray!75!black,title={{Box A-2: Prompt Templates of Self-Evaluation Implementation for Qwen3-4B}}, label={box:qwen3_prompt}]
{\textbf{Stage 1: Answer Generation}}\\
Please provide the answers to the following telecommunications related multiple choice questions. The questions will be in a JSON format, the answers must also be in a JSON format as follows:\\
\{\\
“question 1”: \{\\
“question”: question,\\
“answer”: “option \{answer id\}: \{answer string\}”\\
\},\\
...\\
\}\\
Here are the questions: [question]\\
Please think step by step before answering.
{\textbf{Stage 2: Confidence Evaluation}}\\
Please think step by step to thoroughly assess your previous answer. You need to evaluate how likely your answer is correct from reasoning quality, potential uncertainties, and alternative possibilities.\\
Confidence Analysis:

...

[confidence analysis]

...

Based on my thinking above, I will now output my confidence score between 0 and 1, where 0 means completely uncertain and 1 means completely certain. \\
Format: \{“confidence score”: [number]\}
\end{tcolorbox}

% \subsection{Log-Loss Confidence Score}
% To obtain the prediction $p(y|x)$, we extract the probability distribution over the vocabulary at the specific generation step where the model produces its  selection. For multiple-choice questions where the model generates responses in the format “option [number]”, the implementation applies the following procedure:
% \begin{itemize}
%     \item \emph{Token position identification}: When the model outputs text like “option 1” or “option 2”, we identify the character position where the option number appears in the generated text (e.g., the position of “1” in “option 1”).

%     \item \emph{Token index mapping}: We map this character position back to the corresponding token index in the generation sequence to locate the exact generation step where the option number token is produced.

%     \item \emph{Logits extraction}: At this specific generation step, we extract the logits $l_i$ for all possible option number tokens from the model's output.

%     \item \emph{Probability computation}: The softmax probabilities are computed as
%     \begin{equation}
%         p(y|x) = \frac{e^{l_i/T}}{\sum_{j \in \{1,\ldots,|Y|\}} e^{l_j/T}}
%     \end{equation}
% where $T$ is the temperature.
% \end{itemize} 

% By doing this, it provides a direct measure of prediction uncertainty based on the model's internal token-level representations for the option number tokens.

% { I did not understand this subsection -- isn't this just extracting the logits of the classification head? I don't see the need for the description above}

% \subsection{Self-Confidence Score}
{\color{black}{For the black-box self-confidence score estimation method, in a manner similar to \cite{llm_as_a_judge}, we implement a two-stage process for the Qwen2-1.5B-instruct model, where the model first generates its answer, and then evaluates its confidence in that answer through a separate prompt. This is done by using the following templates in Box A-1.}}

For the Qwen3-4B model, inspired by \cite{test_time_scaling_1,test_time_scaling_2}, we design a test-time scaling approach that dynamically allocates computational budget between answer generation and confidence analysis using the following templates in Box A-2.

%\newpage
\begin{tcolorbox}[colback=orange!10!white,colframe=orange!75!black,title={{Box A-3: Test-Time Scaling Implementation for Qwen3-4B}}]
{\textbf{Stage 1: Answer Generation Test-Time Scaling}}\\
\texttt{<think>}\\
$\ldots$ [thinking tokens for answering during reasoning process] $\ldots$
\textcolor{orange}{Wait...} [inserted when the budget for answering is not exhausted]\\
\textcolor{orange}{\texttt{</think>}} [inserted when the budget for answering is exhausted]

\{\\
“question 1”: \{\\
“question”: question,\\
“answer”: “option \{answer id\}: \{answer string\}”\\
\},\\
...\\
\}

{\textbf{Stage 2: Confidence Analysis Test-Time Scaling}}\\
\texttt{<think>}\\
$\ldots$ [thinking tokens for confidence reasoning during reasoning process]$\ldots$
\textcolor{orange}{Wait...} [inserted when the budget for confidence reasoning is not exhausted]\
\
\emph{Based on my thinking above, I will output my confidence score between 0 and 1.\
Format: \{“confidence score”: [number]\}}\\
\textcolor{orange}{\texttt{</think>}} [inserted when the budget for confidence reasoning is exhausted]

confidence score: 0.X
\end{tcolorbox}

% \vspace{1cm}
The Box A-3 illustrates the mechanism for controlling the thinking budget in our test-time scaling implementation of Section \ref{sim:reasoning}. The total thinking budget is evenly divided between two stages: answer generation and confidence analysis. In both stages, we dynamically control the number of reasoning tokens by inserting \texttt{Wait...} tokens when the allocated budget for that stage is not exhausted, allowing the model to continue reasoning. When the budget is depleted, a \texttt{</think>} tag is inserted to terminate the reasoning process, prompting the model to output the answer in JSON format (Stage 1) or a confidence score between $0$ and $1$ (Stage 2).

% \newpage
\section{ Proof of Proposition 1 and Remark 1}\label{proof}

\subsection{Proof of Proposition 1}

By the Hoeffding-based construction in (\ref{p_value}), each p-value $p_{m,q}$ is super-uniform under its null hypothesis $\mathcal{H}_{m,q} : R_A(\phi_{m,q}) > \alpha$. That is, for any $p \in [0,1]$, we have the inequality (\ref{eq:superu}). In particular, setting $p = \delta/M$, we have
\begin{equation}\label{eq:superu_appendix}
\Pr\left[p_{m,q} \leq \frac{\delta}{M} \;\Big|\; \mathcal{H}_{m,q}\right] \leq \frac{\delta}{M}.
\end{equation}

Consider the $m$-th sequence, it tests hypotheses $\mathcal{H}_{m,Q}, \mathcal{H}_{m,Q-1}, \ldots, \mathcal{H}_{m,1}$ in this order. Let $q^*$ be the index of the first true null hypothesis in the sequence if one exists, i.e., $q_m^* = \max\{q : \mathcal{H}_{m,q} \text{ is true}\}$. If such true null hypothesis does not exist, any testing procedure is trivially reliable thus we will henceforth consider the case where $q_m^*$ exists. Now, let $\mathcal{E}_m$ denote the event that at least one true null hypothesis is incorrectly rejected during the $m$-th sequence testing (\ref{FST_stopping}). By noting that $\mathcal{E}_m$ can only happen whenever the first true null  $\mathcal{H}_{m,q_m^*}$ had been rejected, we have $\Pr[\mathcal{E}_m]= \Pr[p_{m,q_m^*}\leq \delta/M | \mathcal{H}_{m,q_m^*}]\leq \delta/M$. 

% Therefore, the probability of any false rejection in the sequence is bounded by the probability of rejecting the first true null:

% Thanks to the monotonicity property discussed in Section \ref{Reliable Threshold Optimization}, if $\mathcal{H}_{m,q^*}$ is true, i.e.,$R_A(\phi_{m,q^*}) > \alpha$, then all hypotheses $\mathcal{H}_{m,q}$ with $q < q^*$ are also true because $R_A(\phi_{m,q})$ is non-decreasing in $q$.

% A false rejection in the $m$-th sequence occurs if and only if the testing procedure rejects this first true null $\mathcal{H}_{m,q^*}$, which happens when $p_{m,q^*} \leq \delta/M$. Crucially, once we fail to reject $\mathcal{H}_{m,q^*}$ (i.e., when $p_{m,q^*} > \delta/M$), the sequential testing stops, and all subsequent hypotheses $\mathcal{H}_{m,q}$ with $q < q^*$ are not tested and hence not rejected. 

% \begin{align}
% \Pr[\mathcal{E}_m] &= \Pr[\text{reject the first true null in sequence } m] \\
% &= \Pr[p_{m,q^*} \leq \delta/M \mid \mathcal{H}_{m,q^*}] \\
% &\leq \frac{\delta}{M},
% \end{align}
% where the inequality follows from (\ref{eq:superu_appendix}).

Now, let $\mathcal{E}$ denote the event that at least one false rejection occurs across all sequences. By the union bound over the $M$ parallel sequences:
\begin{equation}
\Pr[\mathcal{E}] = \Pr\left[\bigcup_{m=1}^{M} \mathcal{E}_m\right] \leq \sum_{m=1}^{M} \Pr[\mathcal{E}_m] \leq M \cdot \frac{\delta}{M} = \delta.
\end{equation}

By noting that (\ref{proposition 1}) in Proposition~\ref{prop} describes the complement event $\mathcal{E}^c$ where the rejected hypothses $\phi^*$ do not contain any true null hypotheses, we have 
% To establish (\ref{proposition 1}), we observe that the complement event $\mathcal{E}^c$ corresponds to the situation where no false rejection occurs. Since $\Phi^*$ contains exactly those threshold pairs $\phi_{m,q}$ for which the null hypothesis $\mathcal{H}_{m,q}$ was rejected, we have
% \begin{equation}
% \mathcal{E}^c = \left\{\forall \phi \in \Phi^* : R_A(\phi) \leq \alpha\right\}.
% \end{equation}
% Therefore,
\begin{equation}
\Pr\left[\forall \phi \in \Phi^* : R_A(\phi) \leq \alpha\right] = \Pr[\mathcal{E}^c] = 1 - \Pr[\mathcal{E}] \geq 1 - \delta,
\end{equation}
which completes the proof.
\qed

\subsection{{\color{black}{Proof of Remark 1}}}

{\color{black}{For any bounded misalignment loss $\mathcal{A} \in \{0,1\}$, the TV distance bound \cite[Eq.~(4.1)]{levin2017markov} gives
\begin{equation}
\begin{aligned}
    R_A(\phi) &= \mathbb{E}_{P_{xy}}[A(\phi|x,y)],\\ 
    &\leq \mathbb{E}_{\hat{P}_{xy}}[A(\phi|x,y)] + \epsilon_{\text{TV}} = \hat{R}_A(\phi) + \epsilon_{\text{TV}}.
\end{aligned}
\end{equation}
Therefore, enforcing $\hat{R}_A(\phi^*) \leq \alpha - \epsilon_{\text{TV}}$ during MHT-ERM guarantees $R_A(\phi^*) \leq \alpha$ at deployment.}}

% Note that when $\Phi^* = \emptyset$, we set $\Phi^* = \{(0,1)\}$ by convention, which defers all queries to human experts and trivially ensures $R_A(0,1) \leq \alpha$. This completes the proof.
% \qed
% \newpage
\section{A representative example from the TeleQnA dataset} \label{appendix:dataset_example}

The following example illustrates a representative question from the TeleQnA dataset:
\begin{tcolorbox}[colback=purple!5!white,colframe=purple!75!black,title=An Example in TeleQnA Dataset]
{Question:} What is a potential drawback of data fusion-based cooperative wideband sensing techniques?

\begin{itemize}
\item Option 1: Reliable wideband sensing in each cognitive radio
\item Option 2: Saving the total number of measurements
\item Option 3: Heavy data transmission burden in the common control channels
\item Option 4: Improved energy consumption in cellular networks
\item Option 5: Detecting wideband spectrum independently
\end{itemize}

{Answer:} Option 3: Heavy data transmission burden in the common control channels

{Explanation:} Data fusion-based cooperative techniques for wideband sensing can lead to heavy data transmission burden in the common control channels.

{Category:} Research publications
\end{tcolorbox}

\section{{\color{black}{Additional Experiments}}}\label{additional_experiments}

\subsection{{\color{black}{Robustness to validation set size $|\mathcal{D}^{\text{val}}|$}}}\label{appendix:val_size}

{\color{black}{This part presents an additional experiment to validate the robustness of the proposed framework with respect to the size of the validation set $|\mathcal{D}^{\text{val}}|$, which is used to train the Bayesian logistic regression head in the Bayesian learning for white box models. 

Fig.~\ref{fig:val_size_effect_bayes} presents the misalignment and corresponding cost for the cascading systems with thresholds chosen via C-ERM, MHT-ERM-B, and MHT-ERM under different values of $|\mathcal{D}^{\text{val}}|$ varying from $50$ to $300$. We set the target misalignment risk in (\ref{constraint}) to $\alpha = 0.3$ (dashed line) and target reliability in (\ref{FWER_goal}) to $1 - \delta = 0.95$. The results are averaged over $200$ independent experiments (shaded bar on plots shows one standard deviation on both sides).

The results confirm that MHT-ERM and MHT-ERM-B are insensitive to the values of $|\mathcal{D}^{\text{val}}|$, consistently satisfying the alignment constraint across all tested values with only marginal variation in cost. This robustness is consistent with the theoretical guarantee in (\ref{FWER_goal}), which holds irrespective of the quality of the confidence and epistemic uncertainty scores. In contrast, C-ERM exhibits slightly higher sensitivity, since it relies directly on the empirical risk $\hat{R}_A(\phi)$ without any statistical guarantee. As $|\mathcal{D}^{\text{val}}|$ increases and the score quality improves, the misalignment of C-ERM stabilizes near $\alpha = 0.3$.}}

\begin{figure}[htpb]
    \centering
    \includegraphics[width=9cm]{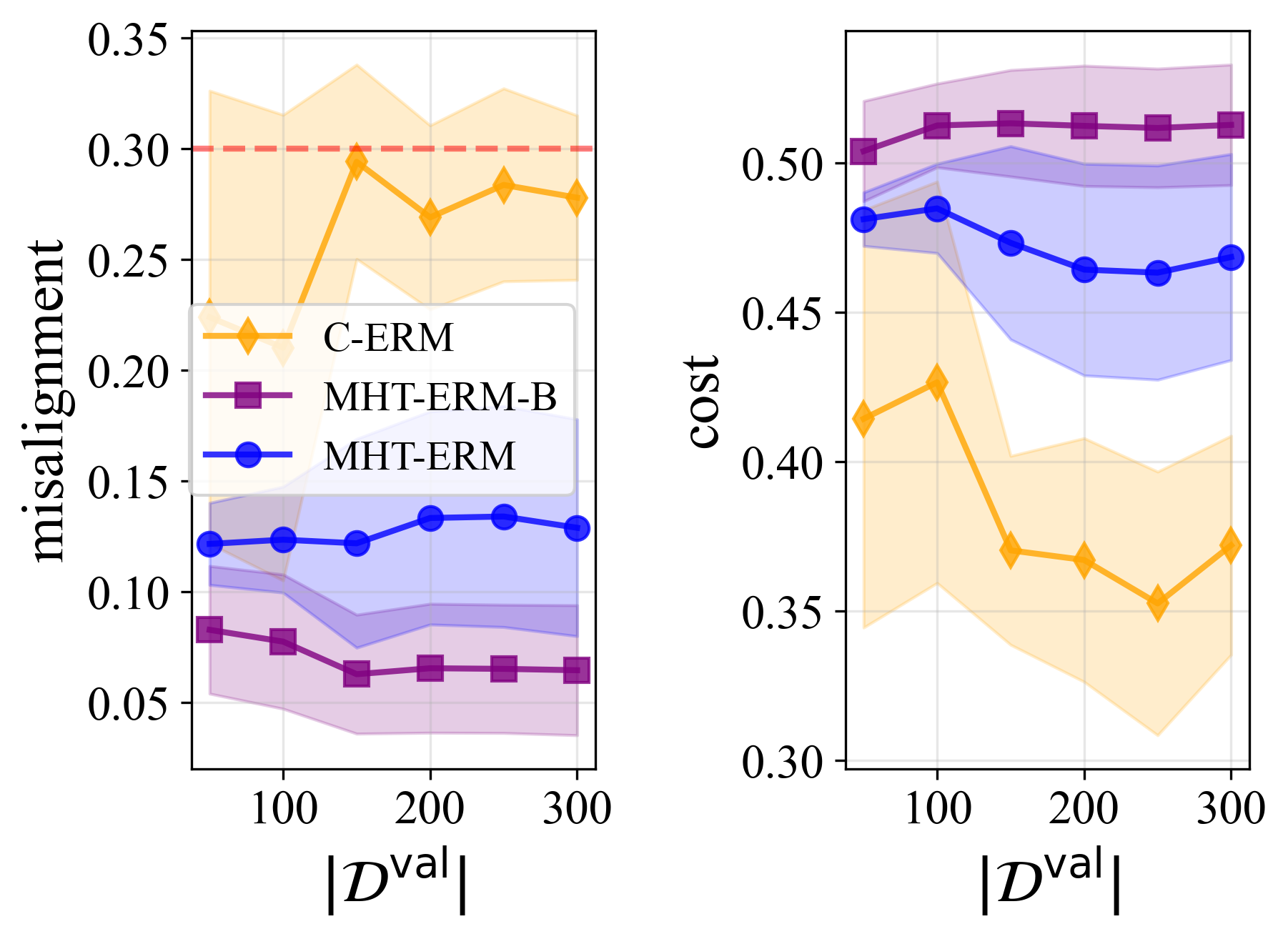}
    \caption{Misalignment and corresponding cost for the cascading systems with thresholds chosen via C-ERM, MHT-ERM-B, and MHT-ERM under different values of $|\mathcal{D}^{\text{val}}|$. We set the target misalignment risk in (9b) to $\alpha = 0.3$ (dashed line) and target reliability in (10) to $1 - \delta = 0.95$. The results are averaged over 200 independent experiments (shaded bar on plots shows one standard deviation on both sides).}
    \label{fig:val_size_effect_bayes}
\end{figure}

\subsection{{\color{black}{Experiments on ORAN-Bench-13K}}}\label{appendix:oran}

{\color{black}{This part presents an additional experiment on ORAN-Bench-13K \cite{gajjar2025oranbench, saenko2026twinpass}, an open-source multiple-choice QA benchmark comprising 13,952 questions derived from 116 O-RAN specification documents, to validate the generalizability of the proposed MHT-ERM framework across different telecom-domain benchmarks. We evaluate the conventional edge-cloud deployment scenario with Qwen2-1.5B-instruct as the edge model and Qwen2-7B-instruct as the cloud model, using the white-box Bayesian learning method for epistemic uncertainty and confidence score evaluation. All other experimental settings follow those described in Sec.~\ref{simulation_setting}.

Fig.~\ref{fig:boxplot_comparison_oran} presents the misalignment and corresponding cost for edge-only, cloud-only, and human-only schemes, as well as for the cascading systems designed via C-ERM, MHT-ERM-B, and MHT-ERM. We set the target misalignment risk in (\ref{constraint}) to $\alpha = 0.15$ (dashed line) and the target reliability in (\ref{FWER_goal}) to $1 - \delta = 0.95$. The colored horizontal lines mark the $1 - \delta = 0.95$-quantile values of the misalignment rate. Maximal values in misalignment and cost performance are reported within the $1.5$ interquartile range (IQR)~\cite{box_plot} across $200$ independent experiments.

The findings are consistent with those on TeleQnA. The proposed MHT-ERM satisfies the alignment constraint while achieving a lower cost than MHT-ERM-B, demonstrating the benefit of the sequential testing strategy. In contrast, C-ERM fails to satisfy the constraint. These results confirm the generalizability of the proposed framework across different telecom-domain benchmarks.}}

\begin{figure}[htpb]
    \centering
    \includegraphics[width=9cm]{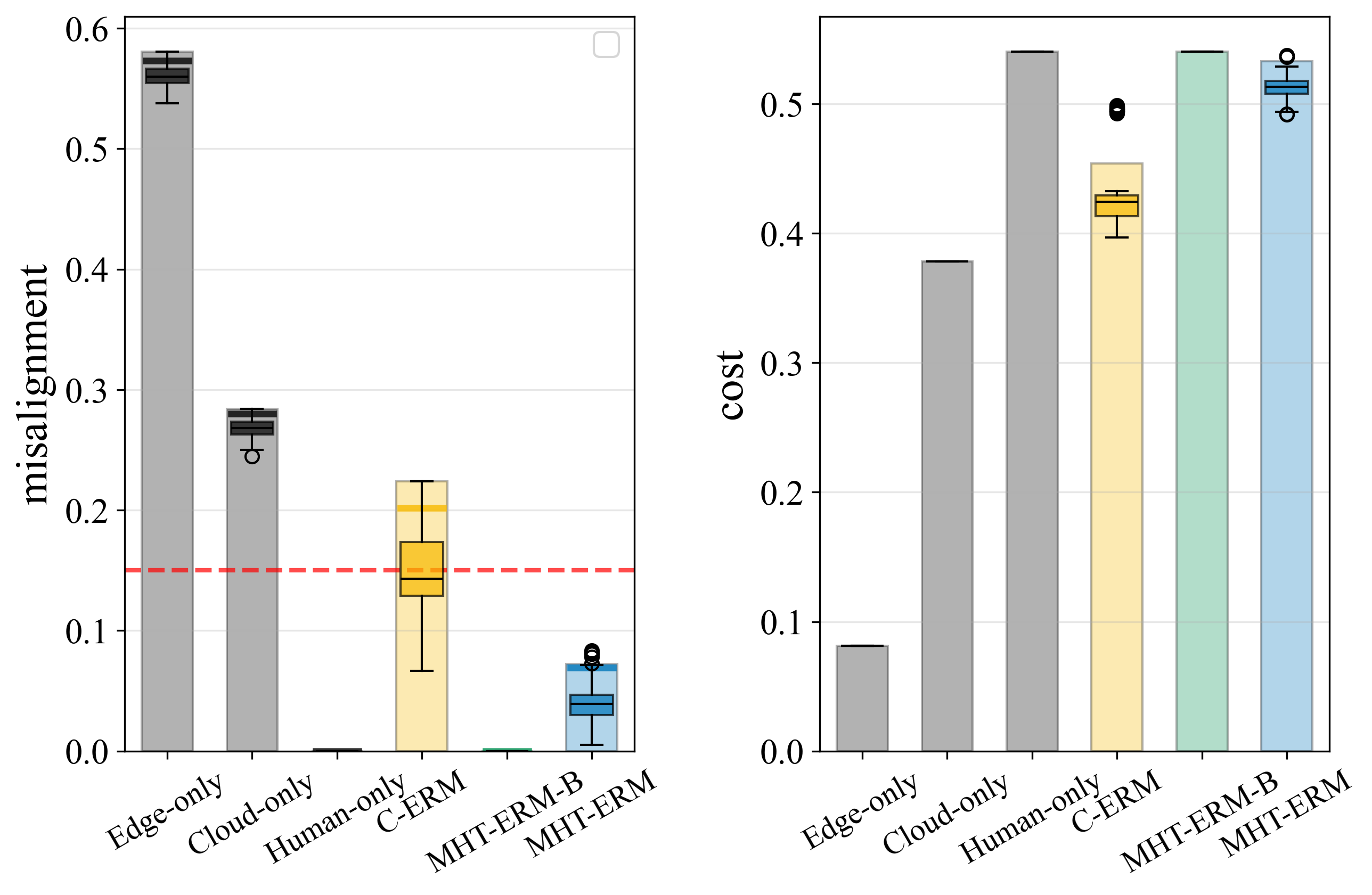}
    \caption{Misalignment and corresponding cost for edge-only, cloud-only, and human-only schemes, as well as for the cascading systems designed via C-ERM, MHT-ERM-B, and MHT-ERM on ORAN-Bench-13K dataset. We set the target misalignment risk in (9b) to $\alpha = 0.15$ (dashed line) and the target reliability in (10) to $1 - \delta = 0.95$. Maximal values are reported within the $1.5$ IQR across 200 independent experiments.}
    \label{fig:boxplot_comparison_oran}
\end{figure}

\end{document}